\newtheorem{theorem}{Theorem}
\begin{document}

\title{Performance Analysis of the Full-Duplex Communicating-Radar Convergence System}

\author{Yinghong Guo, Cheng Li, Chaoxian Zhang, Yao Yao,~\IEEEmembership{Senior~Member,~IEEE}, Bin Xia,~\IEEEmembership{Senior~Member,~IEEE}


\thanks{Yinghong Guo, C. Li, Y. Yao and B. Xia are with the Department of Electronic Engineering, Shanghai Jiao Tong University (SJTU), Shanghai, 200240, China. Email: \{yinghongguo, lichengg, sandyyao, bxia\}@sjtu.edu.cn.}
\thanks{Chaoxian Zhang is with the School of Information Science and Engineering, Xiamen University Tan Kah Kee College, Xiamen, 363105, China. Email: zhangcx@xujc.com }
}

\maketitle
\begin{abstract}
This paper aims to explore the feasibility of the spectrum sharing between the communication and radar system.
\textcolor{black}{We investigate the full-duplex (FD) joint radar and communication multi-antenna system in which a node labeled ComRad with a dual communication and radar capability is communicating with a downlink and an uplink users, as well as detecting the target of interest simultaneously.} Considering a full interference scenario and imperfect channel state information (CSI), the fundamental performance limits of the FD JRC system are analyzed. In particular, we first obtain the downlink rate when the radar signals act as interference.  Then, viewing the uplink channel and radar return channel as a multiple access channel, we propose an alternative successive interference cancellation scheme, based on which the achievable uplink communication rate is obtained. For the radar operation, we first derive the general expression of the estimation rate, which quantifies how much information is obtained about the target in terms of the direction, the range and the velocity. Considering the uniform linear antenna array and linear frequency modulated radar signals, we further obtain the exact closed-form estimation rate. Numerical simulations reveal that the joint manner of the communication and radar operations achieves larger rate regions compared to that of working independently.
\end{abstract}

\begin{IEEEkeywords}
Communication rate, Cram$\acute{\textup{e}}$r-Rao lower bound,  estimation information rate, joint radar and communication system, rate region
\end{IEEEkeywords}

\section{Introduction}

\textcolor{black}{The development of the wireless communication technologies has to envisage the chxallenging problem of satisfying ever-changing services with explosive data traffic \cite{ZHOU2020253}, which incurs heavy pressure on the spectrum requirement.}
\textcolor{black}{The spectrum deficit of international mobile telecommunications to satisfy the ever-increasing demands  becomes a major concern of the industry and academia \cite{8663968}.}
\textcolor{black}{As a promising solution, the joint radar and communication systems have been proposed to share the radar frequency bands with the communication system, by which the spectrum pressure of the communication systems can be alleviated.} On the other hand, the emerging platforms, such as unmanned aerial vehicles (UAVs) and smart cars, require both communication and radar detection operations for safety purpose. In addition, the joint design of the radar and communication could bring advantages of lower hardware cost, saving space and higher energy and spectrum efficiency \cite{Fan2020Survey}.

\subsection{Related Works}
Recently, lots of efforts have been made along the line from independent working to {\color{black}the} joint working of the radar and communication systems. To facilitate the spectrum sharing between the radar and communication system, \cite{mahal2017spectral,biswas2020design,8352726,liu2017robust,8447442,8531782} contributed to the methods designed to alleviate the interference effects.
From the radar performance aspect, \cite{mahal2017spectral} proposed three design methods based on the null-space projection of the radar waveform to mitigate the interference of the radar waveform to the cellular communication system while sharing the same frequency band.
In \cite{biswas2020design}, the coexistence of multiple-input multiple-output (MIMO) cellular system and MIMO radar was studied, where the precoders {\color{black}are} jointly designed for both system to ensure the radar probability of detection and quality of service of cellular users.
 Further contribution \cite{8352726} investigated the joint design of a MIMO radar with co-located antennas and a MIMO communication system aiming at maximizing the signal-to-interference-plus-noise ratio at the radar receiver.
Moreover, \cite{liu2017robust,8447442} studied the beamforming of the downlink communication waveform under the radar detection probability maximization and transmit power minimization criteria, respectively.  The performance limit of the coexistence system was analyzed in \cite{8531782}, where the radar station needs to obtain the decoded communication signals and the communication station needs to obtain the target's estimated information to perform the interference cancellation procedures.

Although the above works have made great progress on the coexistence system design, the radar signals and the communication signals still act as interference and degrade the performance of each other. In addition, since the radar and communication stations are deployed separately and work independently, the exchange of system information, such as the channel state information (CSI) and waveform information, incurs significant system overhead and therefore deteriorates the system performance.

With the emerging of newly developing applications, such as self-driving cars and UAVs, further contributions have considered the dual-function systems where the radar and communication operations are performed simultaneously sharing the same frequency band and infrastructure\cite{Riihonen2021Opt}. While working together, the radar and communication waveform can be jointly designed with the CSI, beamforming schemes, and transmission power shared inherently. Early researches \cite{chiriyath2016inner,chiriyath2017radar,chiriyath2017simultaneous} have exploited the performance limit from the information theory aspect of a single-antenna joint radar and communication system. To unify the radar and communication performance metric, the concept of \emph{estimation information rate} based on the mutual information reduction during estimation procedure has been proposed to evaluate how much information can be obtained about the target within {\color{black}a} unit time. Based on the water-filling algorithm, the optimal bandwidth allocation schemes were also proposed to allocate the whole bandwidth, one part of which is for the dual functions while the other is only for communication or radar operation \cite{chiriyath2016inner,chiriyath2017radar}. 
More generally, the information theory based capacity-distortion region was derived for joint sensing and communication system where the channel state was estimated at the transmitter by means of generalized feedback \cite{ahmadipour2021informationtheoretic}.
However, with a single antenna, the radar can only obtain the distance and velocity information of targets, whereas the direction information {\color{black}is} ignored. In order to obtain the location and status of targets, both direction information and velocity information are needed leading to the equipment of multiple antennas at the radar node \cite{rong2017multiple,cheng2018outer}.

For the multi-antenna dual-function system, the beamforming schemes have been investigated for the separated scenario and shared scenario \cite{liu2018mumimo}. In the separated scenario, radar and communication use different antennas and independent waveforms. Whereas in the shared scenario, radar and the communication share the same antennas and use the identical waveform, i.e., the communication waveform is used for radar probing simultaneously. Accordingly, an extended Kalman filtering framework was employed to enhance the sensing accuracy of predictive beamforming in vehicle-to-infrastructure(V2I) scenario\cite{liu2020radar}. Targeting on the dual-functional waveform design for the multi-antenna system, \cite{Dokhanchi2019mmWave,wang2019power,xu2021wideband} proposed to use the orthogonal frequency division multiplexing (OFDM) signals acting as the dual functions taking its advantages of robustness against multi-path fading and simple synchronization properties. Furthermore, the novel waveform based on the IEEE 802.11ad protocol was designed for a full-duplex (FD) source transmitter which serves for both radar and communication system \cite{kumari2018ieee802,Kumari2020Adaptive}.

In addition, \cite{barneto2021full} has illustrated that simultaneous transmit-and-receive operation is the key enabler for future JRC systems, which leads to communication and radar signals interfering with each other in FD mode. It was claimed that the FD technology could potentially double the spectral efficiency compared with half-duplex counterpart, as it enables simultaneous transmission and reception\cite{barneto2021FDmag}. However, the non-orthogonal operation incurs heavy interference. This motivates us to carry out the study on the feasibility of the FD JRC system from the theoretical aspect. Note that, the above-mentioned works focus on multi-antenna dual-function system {\color{black}operating} in half-duplex mode. However, for FD multi-antenna JRC systems, the potential performance is likely to be enhanced when interference {\color{black}is handled properly}. 
To the best of our knowledge, the fundamental performance analysis for the FD multi-antenna JRC system, where the performance of radar and communication is jointly enhanced, has not been investigated in the literature.


\subsection{The Contribution of This Work}
In this paper, we investigated the performance bound of a full-duplex multi-antenna joint radar and communication system where a node called ComRad equipped with multiple antennas is performing both the radar and communication operations within the same frequency band.  For the radar operation, the ComRad sends one stream of probing signals and detects the target according to the radar return signals. For the communication operation, the ComRad is sending signals to a downlink communication node (DCN) and receiving signals from an uplink communication node (UCN) simultaneously.
By utilizing the term of estimation rate, we unify the radar and communication performance metric which enable us to accurately quantify the trade-off between both functions.

The main contributions of this paper are illustrated in the following:

\begin{itemize}
\item For the joint radar and communication system, we propose a receiver processing structure, based on which the receiver can decode the communication information and estimate targets' information. 
To be specific, the ComRad suppresses the self-interference and radar return signals to decode the uplink communication information, which is then subtracted from the original received signals to extract targets' information of interest. 
This structure captures the convenience of information sharing in a full-duplex system while considering practical implementations, i.e., the imperfect self-interference cancellation and channel estimation.
\item Based on the proposed structure, we study the communication performance based on the maximum ratio transmission (MRT) and maximum ratio combing (MRC) reception beamforming. 
The uplink and downlink communication rate is obtained considering interference from radar return signals, full-duplex self-interference, and radar probing signals.
In particular, we derive the theoretical limit of the uplink communication rate when radar return is not fully suppressed in the received signal due to estimation error.
Taking into account these practical limitations gives us a thorough evaluation of communication performance under the full-duplex system structure.


\item Further, we analyze the performance of radar operations based on the estimation information rate under the effect of communication interference incurred by imperfect cancellation. To get an accurate status of the target, we focus on the distance, direction, and velocity estimation under general signal waveform expressions. 
The special case of linearly frequency modulated (LFM) waveform with uniform linear antenna array under Gaussian distribution interference is considered, and the exact closed-form estimation rates on distance, direction, and velocity are obtained. 

\item Finally, we present the numerical results to demonstrate the joint radar and communication system rate regions. 
 Our results reveal the trade-off between radar estimation rate and communication rate, as well as the non-negligible impact of the imperfect channel estimations and interference cancellations. The superiorities of the proposed receiver structure are presented by comparing with the joint rate regions under different receiver structures.
We demonstrate the feasibility of spectrum sharing of the radar and communication working in full-duplex mode. 
\end{itemize}
\subsection{Outline of This Paper}
The remainder of this paper is organized as follows. Section II illustrate the joint system model, including the signal model and channel state information (CSI) requirements. In
Section III, the downlink and uplink communication rates are derived. Section IV presents the radar estimation information rate bounds. Section V demonstrates the numerical simulation results and Section VI concludes this paper.


\section{System Model}
\begin{figure}[ht]
  \centering
  \includegraphics[width=0.6\linewidth]{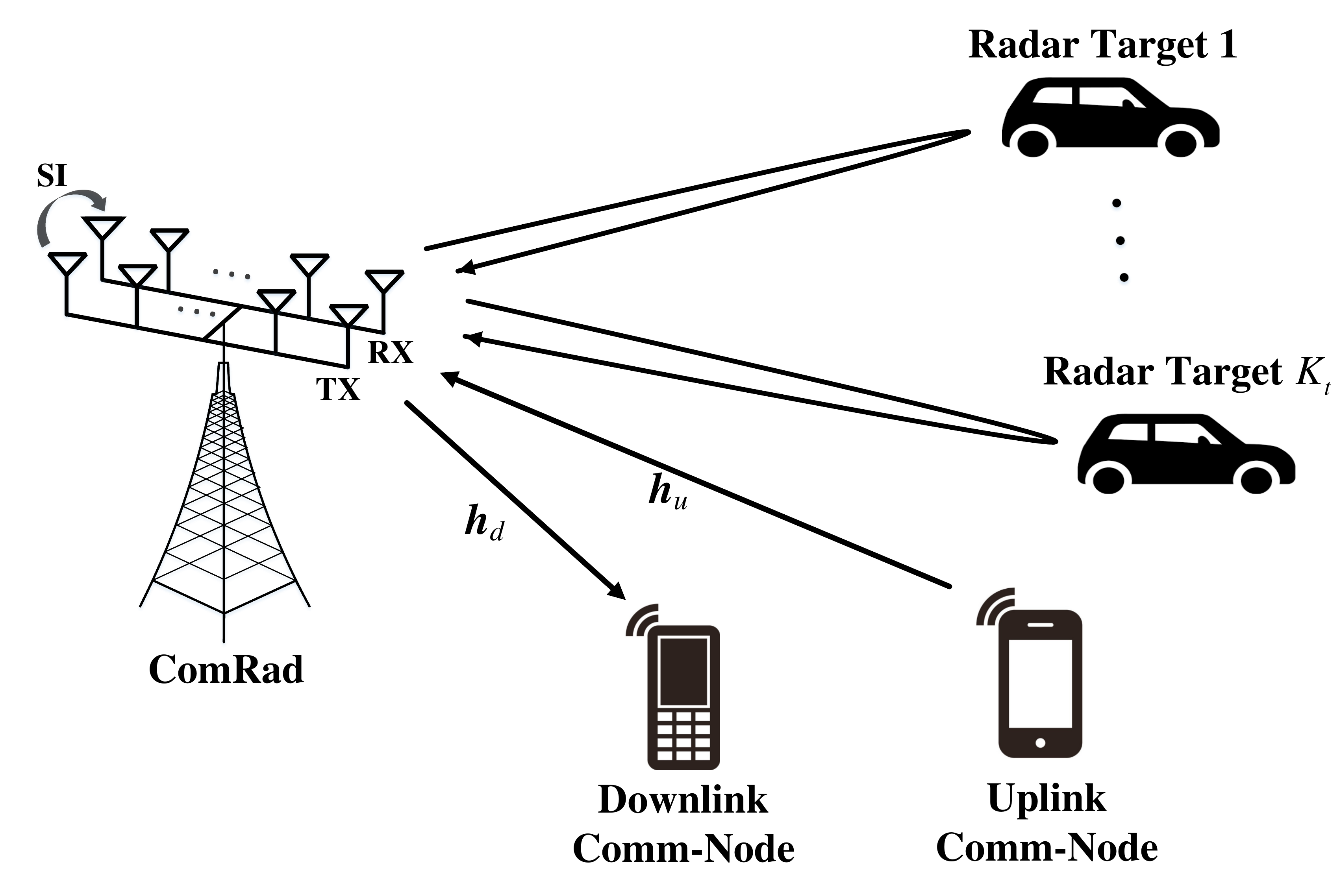}
  \caption{\color{black}A joint full-duplex (FD) communication and radar system model, where a downlink communication node (DCN) and a uplink comm-node (UCN) are communicating with the ComRad, which is detecting the target of interest simultaneously.}\label{system_model}
\end{figure}
The considered multi-antenna FD communication and radar system is depicted in Fig. 1, where the ComRad is receiving uplink signals from the UCN and transmitting downlink signals to the DCN, as well as sending and receiving radar probing signals to detect the target of interest. We assume that the ComRad is equipped with $M$ and $N$ omnidirectional antennas \cite{4350230} for transmitting and receiving, respectively. We consider a simple joint system, where the communication operation, including both the uplink and downlink, and the radar operation are sharing the same frequency band. The communication signals and radar signals are acting as interference to each other. Without loss of generality, we consider the line-of-sight (LoS) channel between the ComRad and the target. Whereas for the communication operation, we consider the Rayleigh distributed non-line-of-sight (NLoS) channel between the communication ends. In this paper, we consider the quasi-static channel, indicating that the communication channel vectors and target parameters are either fixed or change so slightly that they are viewed as constant with each coherent processing interval (CPI) \cite{kumari2018ieee802}.

{\color{black}Noted that we consider a case with only one uplink and downlink communication node to investigate the feasibility of joint radar and communication in full-duplex mode for simplicity and without generality. The derivation and conclusion obtained in single user case are also applicable to multi-user cases when users get access to the network without interference, such as time division multiple access and frequency division multiple access. 
If inter-user interference exists, techniques such as beamforming design, waveform design or interference mitigation methods are in need to alleviate the inter-user interference. However, the methodology is also applicable after considering these interference mitigation methods. As we aim to investigate the trade-off between the radar and communication operations in this paper, the system design to alleviate inter-user interference is out of scope of this paper.}
\subsection{Signal Model}
The signals received at the DCN after co-channel interference cancellation is given by\footnote{The interference from the UCN to the DCN is assumed to be partly cancelled using the methods in \cite{6571320} via wireless side-channels and the residual co-channel interference is modeled as Gaussian noise and its variance is proportional to uplink communication power.} \cite{liu2018mumimo}
\begin{equation}
\bm{y}_{dl}(t) = \bm{h}_{d}^{T}\bm{w}_{d}x_{d}(t)+ \bm{h}^{T}_{d}\bm{s}(t)+z_{d}(t)
\end{equation}
where $x_{d}(t)$ and $\bm{s}(t)$ denotes the downlink signal and radar probing signals with transmission power $P_{dl}$ and $P_{rad}$, respectively. $\bm{w}_{d}$ denotes the downlink beamforming vector. In this paper, we consider the maximum ratio transmission (MRT). $\bm{h}_{d}$ stands for the downlink frequency-flat independent and identically distributed (i.i.d.) Rayleigh fading channel vector with zero mean and covariance matrix $\Omega_{dl}\bm{I}_{M}$, and finally $z_{d}(t)$ is the Gaussian noise with zero mean and variance $\sigma_{z}^2+K_{co}P_{ul}$ including thermal noise and residual co-channel noise, where $K_{co}$ and $P_{ul}$ denote the co-channel interference suppression level and uplink communication power, respectively.

The uplink signal and the radar return signal at the ComRad receiver are given by
\begin{equation}\label{signal-com-rad}
\bm{y}_{\text{\emph{com-rad}}}(t)= \bm{y}_{rad}(t)+\bm{y}_{ul}(t)+\bm{y}_{self}(t)+\bm{y}_{boun}(t)+\bm{z}_{r}(t),
\end{equation}
where $\bm{y}_{rad}(t)$ and $\bm{y}_{ul}(t)=\bm{h}_{u}x_{u}(t)$ denotes the echo of radar probing signal and received uplink signals, respectively. $x_{u}(t)$ stands for the transmitted uplink signal and $\bm{h}_{u}$ represents the uplink frequency-flat i.i.d. Rayleigh fading channel vector with zero mean and covariance matrix $\Omega_{ul}\bm{I}_{N}$.
\textcolor{black}{$\bm{y}_{self}(t)$ and $\bm{y}_{boun}(t)$ denote the self-interference incurred by the FD mode and the downlink communication signals reflected by the targets, respectively.} And finally, $\bm{z}_{r}(t)$ is the Gaussian thermal noise vector at the ComRad with zero mean and variance $\sigma_{0}^{2}$ of each entry.

The radar return signal reflected by targets of interest can be further expressed by
\begin{equation}
\bm{y}_{rad}(t)= \sum_{k=1}^{K_{t}}\alpha_{k}\bm{a}(\theta_{k})\sum_{i=1}^{M}s_{i}(t-\tau_{k})\exp(j\omega_{k}t),
\end{equation}
where $K_{t}$ represents the number of the far apart targets meaning they can be estimated independently \cite{chiriyath2015joint,chiriyath2016inner}. $\alpha_{k}$ denotes the complex combined path loss and target reflection factor \cite{chiriyath2016inner}. $\bm{a}(\theta_{k})$ denotes the antenna steering vector at the direction $\theta_{k}$. $\exp(j\omega_{k}t)$ characterizes the Doppler frequency shift effect. $s_{i}(t)$ denotes the radar probing signals transmitted by the $i$-th antenna element. Here, we assume that the radar waveforms transmitted by different antenna elements are orthogonal to each other, i.e., $Cov[\bm{s}(t)]=\frac{P_{rad}}{M}\bm{I}_{M}$. In addition, by assuming that the transmit antenna elements meets the orthogonality conditions \cite{4156404}, the received signals can be expressed as the sum of different waveforms. $\tau_{k}$ characterizes the transmission time delay between the ComRad and the $k$-th target. It is noted that the parameters $\theta_{k}$, $\tau_{k}$ and $\omega_{k}$ of the targets are of interest to be estimated in this paper. 

{\color{black}The self-interference signal and downlink signals reflected by the targets can be expressed as
\begin{align}
  \bm{y}_{self}(t) &= {\mathbf{H}}_{self}\left[\bm{w}_{d}x_{d}(t)+ \bm{s}(t)\right]\\
  \bm{y}_{boun}(t) &= {\mathbf{H}}_{boun}\bm{w}_{d}x_{d}(t)
\end{align}
where  ${\mathbf{H}}_{self},{\mathbf{H}}_{boun} \in \mathbb{C}^{N \times M}$ denote the channel matrix of self-interference channel and downlink communication bouncing channel, respectively.

For clarity, although both interference are transmitted from the Tx antennas to the Rx antennas, they are fundamentally different. The bouncing interference is with a large delay which is comparable at symbol level while the delay of self-interference is negligible. Thus, they have to be cancelled separately with different methods.

}

\subsection{CSI Requirements}
It is noted that in the considered joint radar and communication system, there are multiple interference links described in the following.

 \begin{itemize}
   \item  \emph{Communication downlink:} \textcolor{black}{The downlink communication signals} are interfered by 1) the omnidirectional radar probing signals and 2) the uplink communication signals.
   \item \emph{Communication uplink:} \textcolor{black}{The uplink communication signals} are interfered by 1) the radar return signals and 2) self-interference incurred by the FD mode.
   \item \emph{Radar return signals:} The radar return signals are interfered by 1) the uplink communication signals from the UCN and 2) the self-interference incurred by the FD mode.
 \end{itemize}

To correctly decode the desired communication signals and improve the radar estimation accuracy, the CSI of $\bm{h}_{d}$ and $\bm{h}_{u}$ are needed. The estimated $\bar{\bm{h}}_{d}$ and $\bar{\bm{h}}_{u}$ are described as
\begin{equation}
\bm{h}_{d}=\rho_{1}\bar{\bm{h}}_{d}+(\sqrt{1-\rho^{2}_{1}})\bm{\varepsilon}_{1},
\end{equation}
\begin{equation}
\bm{h}_{u}=\rho_{2}\bar{\bm{h}}_{u}+(\sqrt{1-\rho^{2}_{2}})\bm{\varepsilon}_{2},
\end{equation}
where $\rho_{1} \in (0,1)$ and $\rho_{2} \in (0,1)$ denote the correlation coefficients between the true value and the estimated value of the CSI \cite{4801449}. The entries of $\bar{\bm{h}}_{d}$ and $\bar{\bm{h}}_{u}$ are complex Gaussian random variables with zero mean and variances $\Omega_{dl}$ and $\Omega_{ul}$, respectively. $\bm{\varepsilon}_{1}$ and $\bm{\varepsilon}_{2}$ denote the independent complex Gaussian estimation errors with zero mean and variances $\Omega_{dl}$ and $\Omega_{ul}$, respectively.

\section{Performance of the Communication Operations}
In this section, we first derive the downlink communication rate. For the uplink, we propose an alternating-SIC scheme, based on which we then obtain the uplink communication rate.

\subsection{Downlink Communication Rate}
Considering the imperfect channel estimation, the received signals at the DCN are re-written as
\begin{multline}
y_{dl}(t) =  \rho_{1}\bm{w}_{d}^{T}\bar{\bm{h}}_{d}x_{d}(t)+\rho_{1}\bar{\bm{h}}^{\dag}_{d}\bm{s}(t)+(\sqrt{1-\rho^{2}_{1}})\bm{w}_{d}^{T}\bm{\varepsilon}_{1}x_{d}(t)+(\sqrt{1-\rho^{2}_{1}})\bm{\varepsilon}^{T}_{1}\bm{s}(t)+z_{d}(t).
\end{multline}

Since the radar probing signals are deterministic sequence and the $\bar{\bm{h}}_{d}$ is known, the DCN can eliminate the  radar interference $\rho_{1}\bar{\bm{h}}_{d}\bm{s}(t)$ prior to the communication decoding. For the downlink beamforming, we adopt the MRT beamformer. Hence, we have $\bm{w}_{d}^{T}=\frac{\bar{\bm{h}}^{\dag}_{d}}{||\bar{\bm{h}}^{\dag}_{d}||_{2}}$. The remained signals are given by
\begin{align}\label{comm-decoding}
\!\!\!\!\hat{y}_{dl}(t) =& \rho_{1}\frac{\bar{\bm{h}}^{\dag}_{d}}{||\bar{\bm{h}}^{\dag}_{d}||_{2}}\bar{\bm{h}}_{d}x_{d}(t)+(\sqrt{1-\rho^{2}_{1}})\frac{\bar{\bm{h}}^{\dag}_{d}}{||\bar{\bm{h}}^{\dag}_{d}||_{2}}\bm{\varepsilon}_{1}x_{d}(t)+(\sqrt{1-\rho^{2}_{1}})\bm{\varepsilon}^{T}_{1}\bm{s}(t)+z_{d}(t).
\end{align}
\textcolor{black}{Based on the signal expressed in (\ref{comm-decoding}), the downlink communication rate is given by}

\begin{align}
R_{dl}=&f_{B}\mathbb{E}\left\{\log_2\left(1+\frac{\rho_{1}^{2}\sum_{i=1}^{M}|h_{d}(i)|^2P_{dl}}{I_{dl}^{dl}+I_{dl}^{rad}+I_{dl}^{ul}+\sigma_{z}^{2}}\right)\right\}\notag\\
\overset{(a)}\le&f_{B}\log_2\left\{(1+\mathbb{E}\left\{\frac{\rho_{1}^{2}\sum_{i=1}^{M}|h_{d}(i)|^2P_{dl}}{I_{dl}^{dl}+I_{dl}^{rad}+I_{dl}^{ul}+\sigma_{z}^{2}}\right\}\right)\notag\\
=&f_{B}\log_{2}\left(1+\frac{\rho_{1}^{2}MP_{dl}\Omega_{dl}}{I_{dl}^{dl}+I_{dl}^{rad}+I_{dl}^{ul}+\sigma_{z}^{2}}\right)
\end{align}

where (a) is achieved by using the Jensen's inequality \cite{cover2012elements}. $I_{dl}^{dl}=(1-\rho_{1}^{2})\Omega_{dl}P_{dl}$ ,$I_{dl}^{rad}=(1-\rho_{1}^{2})\Omega_{dl}P_{rad}$ and $I_{dl}^{ul}=K_{co}P_{ul}$ characterize the residual downlink signals, radar signals and uplink signals  acting as interference to the downlink decoding, respectively, due to the imperfect channel estimation and interference cancellation.

\subsection{Uplink Communication Rate}
From the expression (\ref{signal-com-rad}), we note that the radar return signals and uplink communication signals are received simultaneously. According to \cite{chiriyath2016inner}, the radar signals and the uplink signals can be viewed as MAC signals. However, different from the traditional MAC, where signals of all the uplink streams are unknown, in the communication-radar MAC, the transmitted radar waveform is known at the receiver. Hence, to improve the uplink decoding performance, an SIC scheme is proposed and shown in the following diagram.

\begin{figure}[ht]
  \centering
  \includegraphics[width=0.8\linewidth]{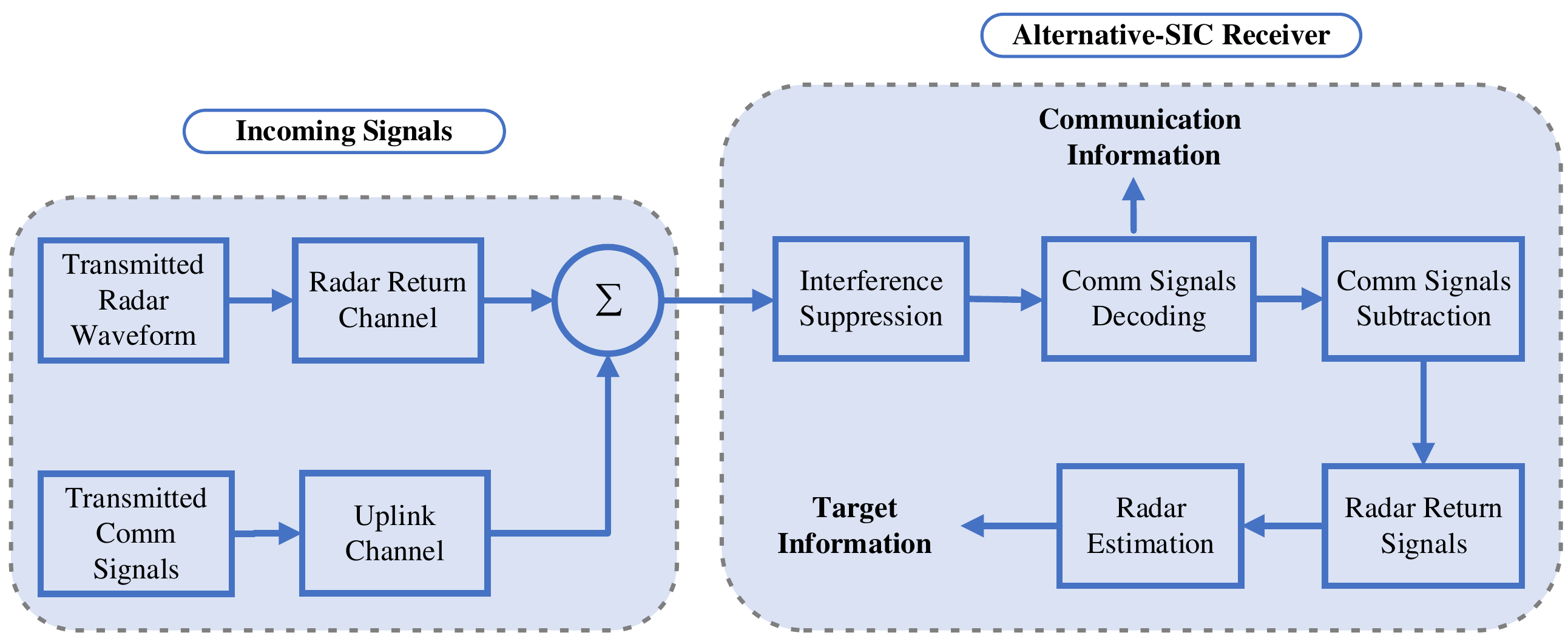}
  \caption{The proposed alternative-SIC scheme for the ComRad receiver.}\label{iterative-SIC}
\end{figure}

The key steps of the proposed alternative-SIC scheme are explained as
\begin{enumerate}
\item \emph{Interference suppression}: In this step, the ComRad first suppress the self-interference $\bm{y}_{self}(t)$ incurred by the FD mode \cite{duarte2012experiment} and the return of the downlink signals $\bm{y}_{boun}(t)$ reflected by the target \cite{wang2014radar}, since these two parts degrades both the radar and communication performance. The remained signals are denoted by $\bm{y}_{\emph{com-rad}}'(t)$. Then, the ComRad suppress the radar return signals $\bm{y}_{rad}(t)$ from $\bm{y}_{\emph{com-rad}}'(t)$ \cite{su2018time}. The remained signals are denoted by $\hat{\bm{y}}_{ul}(t)$.
\item \emph{Communication signals decoding}: Based on $\hat{\bm{y}}_{ul}(t)$, we perform the MRC receiver beamforming and we have $\hat{y}'_{ul}(t)$, which is used for uplink communication signals decoding. In this step, we consider the perfect decoding.
\item \emph{Communication signals subtraction}: After the desired uplink signals have been extracted, the ComRad could subtract $\bm{y}_{ul}(t)$ from $\bm{y}_{\emph{com-rad}}'(t)$. The remained signal is denoted by $\hat{\bm{y}}_{rad}(t)$.
\item \emph{Radar estimation}: After the uplink signals subtraction, the ComRad performs the radar estimation using $\hat{\bm{y}}_{rad}(t)$ and extracts the target's information.
\end{enumerate}

After the interference cancellation step,
\begin{align}
\hat{\bm{y}}_{ul}(t)=&(\rho_{1}\bar{\bm{h}}_{u}+(\sqrt{1-\rho_{1}^{2}})\bm{\varepsilon}_{1})x_{u}(t)+\bar{\bm{y}}_{boun}(t)+\bar{\bm{y}}_{self}(t)+\bar{\bm{y}}_{rad}(t)+\bm{z}_{r}(t),
\end{align}
where $\bar{\bm{y}}_{self}(t)$, $\bar{\bm{y}}_{boun}(t)$, and $\bar{\bm{y}}_{rad}(t)$ denote the residual self-interference (RSI), residual downlink communication signals reflected by the target, and residual radar return signals, respectively.

{\color{black}As the transmitted radar waveform and downlink communication signal is already known to the ComRad, with the estimated self-interference channel ${\bf{H}}_{self}$ and communication bouncing channel ${\bf{H}}_{boun}$, the ComRad can subtract the self-interference and bouncing communication signal. The residual interference after cancellation of ComRad can be expressed as
\begin{align}
  \bar{\bm{y}}_{self}(t) &= {\mathbf{H}}_{self}{\tilde{\bm{x}}}_{self}(t)\\
  \bar{\bm{y}}_{boun}(t) &= {\mathbf{H}}_{boun}{\tilde{{\bm{x}}}_{boun}(t)}
\end{align}
where ${\tilde{\bm{x}}}_{self}(t)$ and ${\tilde{\bm{x}}}_{boun}(t)$ represent the residual self-interference signals and and bouncing signals after cancellation at the ComRad.
}

\textcolor{black}{ According to the experiment characterization of the residual self-interference after interference suppression \cite{duarte2012experiment}, $\bar{\bm{y}}_{self}$ can be modeled as additional Gaussian noise and subject to $\bar{\bm{y}}_{self}\sim \mathcal{CN}(0,K_{self}(P_{dl}+P_{rad})\mathbf{I}_{N})$, i.e. the variance of the noise is proportional to the sum of downlink transmission power $P_{dl}$ and radar transmission power $P_{rad}$, where $K_{self}$ denotes the self-interference cancellation capability \cite{7801128}. Similarly, as the residual bouncing interference $\bar{\bm{y}}_{boun}$ inherits the statistical randomness of communication bouncing signal that carries downlink communication symbols and unknown radar parameters, we  modeled is as Gaussian \cite{wang2014radar} with variance $K_{boun}P_{dl}\mathbf{I}_{N}$, which have shown a good match to the statistical distribution of the residual interference from the simulation\cite{wang2014radar}. Here, $K_{boun}$ characterizes the bouncing interference suppression level. }

{\color{black}We note that $K_{self}$ and $K_{boun}$ are empirical values between one and zero, which are dependent on the RSI suppression schemes and downlink bouncing signal cancellation schemes used by the ComRad, respectively. The interference cancellation scheme with higher suppression level results in smaller $K_{self}$ and $K_{boun}$.}

{\color{black}In the following, we derive the theoretical limit on the power of residual radar return signal when radar parameter is estimated.} By defining $\bm{\hat{h}}_{rad,k}$ as the estimated radar channel between the $k$th target and the ComRad , and $S_{k}(t-\hat{\tau},\hat{\omega}_k)=\sum_{i=1}^{M}s_{i}(t-\hat{\tau}_{k})\exp(j\hat{\omega}_{k}t)$ as the estimated signal reflected by the $k$th target , the residual radar return signals can be expressed as
\begin{align}
  \bar{\bm{y}}_{rad}(t) &= \sum_{k=1}^{K_{t}}\alpha_{k}\bm{a}(\theta_{k})\sum_{i=1}^{M}s_{i}(t-\tau_{k})\exp(j\omega_{k}t)-\sum_{k=1}^{K_{t}}\bm{\hat{h}}_{rad,k}S_{k}(t-\hat{\tau},\hat{\omega}_k)
\end{align}

In order to evaluate the achievable performance limit of the radar return signal suppression process, we apply the first order Taylor series expansion to approximate the residual signal which can be rewritten into
\begin{align}\label{appro_level}
  &\bar{\bm{y}}_{rad}(t) \approx \sum_{k=1}^{K_{t}}\bm{h}_{rad,k} \left( (\tau-\hat{\tau}_k) \frac{\partial S_{k}(t-\hat{\tau},\hat{\omega}_k)}{\partial \tau_k} + (\omega-\hat{\omega}_k) \frac{\partial S_{k}(t-\hat{\tau},\hat{\omega}_k)}{\partial \omega_k}\right) +\sum_{k=1}^{K_{t}}\bm{\mu}_{k}S_{k}(t-\hat{\tau},\hat{\omega}_k)
\end{align}


We have $\bm{h}_{rad,k}=\bm{\hat{h}}_{rad,k}+\bm{\mu}_{k}$, where $\bm{\mu}_{k} \in \mathbb{C}^{M\times 1}$ denotes the unbiased radar channel estimator error matrix whose best-case variance is given by the Cram$\acute{\textup{e}}$r-Rao Lower Bound (CRLB). The variance of $i$-th element of $\bm{\mu}_{k}$ can be expressed as\cite{bliss2013adaptive}
\begin{equation}
  \mathbb{E}\left[ \mu_{k,i}\right] \geq \frac{M\sigma_{0}^{2}}{\delta T_{c} f_{B} P_{rad}}
\end{equation}
where $\delta$, $T_{c}$ and $f_{B}$ denote duty factor, coherent processing interval and bandwidth of the radar signal respectively. Then, we obtain $I_{ul}^{rad}$ as the impact of the interfering residual radar signal after suppression to the uplink signal at the top of next page in (\ref{residual-radar}).
\begin{figure*}
  \vspace{-0.3in}
  \begin{align}\label{residual-radar}
    P_{I} &= \mathbb{E}\left[ \bar{\bm{y}}_{rad}(t)^{H}\bar{\bm{y}}_{rad}(t) \right] = \sum_{k=1}^{K_{t}} \left\{ \alpha_{k}^2\bm{a}^H(\theta_{k})\bm{a}(\theta_{k}) \mathbb{E}\left[ (\tau-\hat{\tau}_k)^2 \left|\frac{\partial S_{k}}{\partial\tau_k}\right|^2 +(\omega-\hat{\omega}_k)^2 \left|\frac{\partial S_{k}}{\partial\omega_k}\right|^2\right] \right\} \notag\\
    &+ \sum_{k=1}^{K_{t}} \left\{ \alpha_{k}^2\bm{a}^H(\theta_{k})\bm{a}(\theta_{k})\mathbb{E}\left[2(\tau-\hat{\tau}_k)(\omega-\hat{\omega}_k)\left|\frac{\partial S_{k}}{\partial\tau_k}\frac{\partial S_{k}}{\partial\omega_k}\right|\right]  \right\}\notag+\frac{K_t\sigma_{0}^{2}}{\delta T_{c} f_{B}} \notag\\
    &\overset{(a)}\geq \sum_{k=1}^{K_{t}} \!\!\left\{ MP_{rad} \!\!\left( 4\pi^{2} B_{rms}^{2} \bm{C}(\tau_k,\tau_k) \!+\!\mathbb{E}\left[ t^2S_{k}^2 \right] \bm{C}(\omega_k,\omega_k) \!+\!\mathbb{E}\left[\frac{tS_{k}\partial S_{k}}{\partial\tau_k}\right] \bm{C}(\tau_k,\omega_k)\right)\!\! \right\}\!\! +\frac{K_t\sigma_{0}^{2}}{\delta T_{c} f_{B}}
  \end{align}
  \hrulefill
\end{figure*}

{\color{black}where (a) is achieved since the variance of the unbiased estimated parameter of the $k$th target $\mathbb{E}[(\tau-\bar{\tau})^2]$, $\mathbb{E}[(\omega-\bar{\omega})^2]$ and $\mathbb{E}[(\tau-\bar{\tau})(\omega-\bar{\omega})]$ is bounded by $\bm{C}(\tau_k,\tau_k)$, $\bm{C}(\omega_k,\omega_k)$ and $\bm{C}(\tau_k,\omega_k)$ which is the Cram$\acute{\textup{e}}$r-Rao bound (CRB) defined in Theorem 1 in the next section.} We further define $P_{I}=K_{rad}P_{rad}$ indicating that the residual radar signal is proportional to radar power by a factor $K_{rad}$, we obtain that the theoretical limit on suppression level of the radar return signal
\begin{align}\label{radar_suppresion_level}
  K_{rad}^{*} &= M\sum_{k=1}^{K_{t}} \left\{ \left( 4\pi^{2} B_{rms}^{2} \bm{C}(\tau_k,\tau_k) +\mathbb{E}\left[ t^2S_{k}^2 \right] \bm{C}(\omega_k,\omega_k) \right.\right.\notag\\
  &\left.\left.+\mathbb{E}\left[\frac{tS_{k}\partial S_{k}}{\partial\tau_k}\right] \bm{C}(\tau_k,\omega_k)\right) \right\} +\frac{K_t\sigma_{0}^{2}}{\delta T_{c} f_{B}P_{rad}}
\end{align}

It is noted that the theoretical limit on radar signal suppression level $K_{rad}^{*}$ is dependent on the choice of radar signal type $\bm{s}(t)$. 

Subsequently, with the MRC beamformer, we have
\begin{align}\label{signal-up}
\hat{y}'_{ul} =&\rho_{1}\frac{\bar{\bm{h}}^{\dag}_{u}}{||\bar{\bm{h}}_{u}||_{2}}\bar{\bm{h}}_{u}x_{u}(t)+(\sqrt{1-\rho_{1}^{2}})\frac{\bar{\bm{h}}^{\dag}_{u}}{||\bar{\bm{h}}_{u}||_{2}}\bm{\varepsilon}_{1}x_{u}(t)\notag\\
&+\frac{\bar{\bm{h}}^{\dag}_{u}}{||\bar{\bm{h}}_{u}||_{2}}\left[\bar{\bm{y}}_{rad}(t)+\bar{\bm{y}}_{boun}(t)+\bar{\bm{y}}_{self}(t)+\bm{z}(t)\right].
\end{align}

Based on the signals expressed in (\ref{signal-up}), the uplink communication rate is given by
\begin{align}\label{uplink-rate}
  R_{ul} =&\mathbb{E}\left\{f_{B}\log_{2}\left(1+\frac{\rho^{2}_{1}\sum_{i=1}^{N}|\bar{h}_{u}(i)|^2P_{ul}}{I_{ul}^{ul}+I_{ul}^{dl}+I_{ul}^{rad}+\sigma^{2}_{z}}\right)\right\} \notag\\
  \overset{(b)}\le& f_{B}\log_{2}\left(1+\mathbb{E}\left\{\frac{\rho^{2}_{1}\sum_{i=1}^{N}|\bar{h}_{u}(i)|^2P_{ul}}{I_{ul}^{ul}+I_{ul}^{dl}+I_{ul}^{rad}+\sigma^{2}_{z}}\right\}\right)\notag\\
  =&f_{B}\log_{2}\left(1+\frac{\rho^{2}_{1}NP_{ul}\Omega_{ul}}{I_{ul}^{ul}+I_{ul}^{dl}+I_{ul}^{rad}+\sigma^{2}_{z}}\right),
\end{align}
where (b) is achieved by using the Jensen's inequality \cite{cover2012elements}. $I_{ul}^{ul}=(1-\rho^{2}_{1})\Omega_{ul}P_{ul}$, $I_{ul}^{dl}=(K_{self}+K_{boun})P_{dl}$ and $I_{ul}^{rad}=(K_{self}+K_{rad})P_{rad}$ characterize the impacts of the residual co-channel uplink signals, downlink signals and radar signal acting as interference to the uplink decoding, respectively. $K_{rad}$, $K_{self}$ and $K_{boun}$ characterize suppression level of the radar signals, self-interference and bounces of the downlink signals, respectively. 

It is noted that the residual interference after suppression are modeled as complex Gaussian random variables due to the imperfect cancellation. The corresponding uplink then given by substituting (\ref{residual-radar}) into (\ref{uplink-rate}), which is the achievable uplink communication rate of the proposed alternative-SIC scheme.

\section{Performance of the Radar Operation}
To unify the performance metric of the joint radar and communication system, the estimation rate, which is analogous to the communication rate for communication operation, is derived to evaluate the performance of the radar operations\cite{chiriyath2015joint}. Different from the traditional performance metrics of radar system, i.e., the mean-square error (MSE) \cite{4156404}, the estimation rate is a measure of how much information about the target is obtained within unit time period. Since the radar targets are assumed to be far apart, the system sum estimation rate can be viewed as the sum of each target's estimation rate \cite{chiriyath2016inner}. In the following, we derive the estimation rate for the $k$-th target.

The estimation rate can be calculated by
\begin{equation}
R_{est}=\frac{1}{T_{c}}\log(1+\frac{\sigma_{dyn}^{2}}{\text{CRB}_{est}})
\end{equation}
where $T_{c}$ is the duration of each CPI. $\sigma_{dyn}^{2}$ is the variance of the target's unknown dynamic process under Gaussian assumption. $\text{CRB}_{est}$ is the Cram$\acute{\textup{e}}$r-Rao bound, which is the lower bound of the MSE of all unbiased estimators \cite{4156404}, of the estimated parameter.

After the uplink signals being extracted, the ComRad can subtract the received uplink signal $\bm{y}_{ul}$ from $\bm{y}'_{com-rad}(t)$.
\textcolor{black}{Hence, $\hat{\bm{y}}_{rad}(t)$ is given by}
\begin{align}
\!\!\!\!\!\!\!\!\hat{\bm{y}}_{rad}(t)=&\alpha_{k}\bm{a}(\theta_{k})\exp(j\omega_{k}t)\sum_{i=1}^{M}s_{i}(t-\tau_{k})\notag\\
&+\!\sqrt{1-\rho^2_{1}}\bm{\varepsilon}_{1}x_{u}(t)+\bar{\bm{y}}_{self}(t)+\bar{\bm{y}}_{boun}(t)+\bm{z}(t).
\end{align}

After sampling and stacking all the samples into one vector, we have
\begin{align}\label{rad-sample}
\hat{\bm{y}}_{rad}=&\left[\hat{\bm{y}}_{rad}[1],...,\hat{\bm{y}}_{rad}[L]\right]\notag\\
=&\hat{\bm{y}}'_{rad}+\hat{\bm{z}}\notag\\
=&\alpha_{k} \cdot \bm{y}'_{rad}\otimes \bm{a}(\theta_{k})+\hat{\bm{z}},
\end{align}
where $\hat{\bm{y}}_{rad}[l]$ denotes the \emph{l}-th sample of $\hat{\bm{y}}_{rad}(t)$ and it is given by
\begin{align}
\hat{\bm{y}}_{rad}[l]=&\alpha_{k}\bm{a}(\theta_{k})\exp(j\omega_{k}l\Delta t)\sum_{i=1}^{M}s_{i}(l\Delta t-\tau_{k})\notag\\
&+\!\sqrt{1-\rho^2_{1}}\bm{\varepsilon}_{1}x_{u}(l\Delta t)+\bar{\bm{y}}_{self}(l\Delta t)\notag\\
&+\bar{\bm{y}}_{boun}(l\Delta t)+\bm{z}(l\Delta t).
\end{align}
$\hat{\bm{y}}'_{rad}$ is given by
\begin{align}
\!\!\!\!\!\!\hat{\bm{y}}'_{rad}=&\left[\alpha_{k}\bm{a}^{T}(\theta_{k})\exp(j\omega_{k}\Delta t)\sum_{i=1}^{M}s_{i}(\Delta t-\tau_{k}),...,\right.\notag\\
&\left.\alpha_{k}\bm{a}^{T}(\theta_{k})\exp(j\omega_{k}L\Delta t)\sum_{i=1}^{M}s_{i}(L\Delta t-\tau_{k})\right]^{T}.
\end{align}
$\bm{y}'_{rad}$ is given by
\begin{align}
\bm{y}'_{rad}=&\left[\exp(j\omega_{k}\Delta)\sum_{i=1}^{M}s_{i}(\Delta t-\tau_{k}),...,\right.\notag\\
&\left.\exp(j\omega_{k}L\Delta t)\sum_{i=1}^{M}s_{i}(L\Delta t-\tau_{k})\right]^{T}.
\end{align}
$\hat{\bm{z}}$ denotes the combined interference and noise. It is given by
\begin{align}
\hat{\bm{z}}=&\left[\sqrt{1-\rho^2_{1}}\bm{\varepsilon}^{T}_{1}x_{u}(\Delta t)+\bar{\bm{y}}^{T}_{self}(\Delta t)+\bar{\bm{y}}^{T}_{boun}(\Delta t)\right.\notag\\
&+\bm{z}^{T}(\Delta t),...,\sqrt{1-\rho^2_{1}}\bm{\varepsilon}^{T}_{1}x_{u}(L\Delta t)+\bar{\bm{y}}^{T}_{self}(L\Delta t)\notag\\
&+\bar{\bm{y}}^{T}_{boun}(L\Delta t)+\bm{z}^{T}(L\Delta t)\bigg]^{T}
\end{align}

Based on the sampled signals in (\ref{rad-sample}), to obtain the Cram$\acute{\textup{e}}$r-Rao bounds on direction, range and velocity estimation, we first need to obtain the Fisher information matrix (FIM), which is the inverse of the CRB matrix, on the $\theta_{k}$, $\tau_{k}$ and $\omega_{k}$. For the complex baseband signals, according to Eq. (15.52) in \cite{kay1993fundamental}, the FIM can be calculated by
\begin{equation}\label{FIM}
\bm{F}=2\cdot\text{Re}\{\bm{d}^{*}(\bm{\chi})(\bm{\Gamma}^{-1}\otimes\bm{\Lambda}^{-1})\bm{d}^{T}(\bm{\chi})\}
\end{equation}
where
\begin{align}
\bm{d}(\bm{\chi})=&\left[\bm{d}^{T}_{\alpha_{k}}(\bm{\chi}),\bm{d}^{T}_{\theta_{k}}(\bm{\chi}),\bm{d}^{T}_{\tau_{k}}(\bm{\chi}),\bm{d}^{T}_{\omega_{k}}(\bm{\chi})\right]^{T}\notag\\
=&\left[\frac{\partial{\hat{\bm{y}}'_{rad}}}{\partial{\alpha_{k}}^{T}},\frac{\partial{\hat{\bm{y}}'_{rad}}}{\partial{\theta_{k}}},\frac{\partial{\hat{\bm{y}}'_{rad}}}{\partial{\tau_{k}}},\frac{\partial{\hat{\bm{y}}'_{rad}}}{\partial{\omega_{k}}}\right]^T
\end{align}
$\bm{\chi}=[\alpha_{k}, \theta_{k},\tau_{k},\omega_{k}]^{T}$ denotes the vector of all the unknown parameters.
$\bm{\Gamma}$ and $\bm{\Lambda}$ denote the temporal and spatial domain covariance matrices of the combined noise and residual interference, respectively. They are given by
\begin{equation}
\bm{\Gamma} =
\left[
\begin{matrix}
\bar{\sigma}_{t_{1,1}}^{2} &\cdots & \bar{\sigma}_{t_{1,N}}^2 \\
\vdots &\ddots & \vdots \\
\bar{\sigma}^{2}_{t_{N,1}} &\cdots & \bar{\sigma}_{t_{N,N}} \\
\end{matrix}
\right],
\bm{\Lambda} =
\left[
\begin{matrix}
\bar{\sigma}_{s_{1,1}}^{2} &\cdots & \bar{\sigma}_{s_{1,N}}^2 \\
\vdots &\ddots & \vdots \\
\bar{\sigma}^{2}_{s_{N,1}} &\cdots & \bar{\sigma}^2_{s_{N,N}} \\
\end{matrix}
\right]
\end{equation}
$\alpha_{k}=\text{Re}\{\alpha_{k}\}+\text{Im}\{\alpha_{k}\}j$ compose of the real part and imaginary part. Then, we have

\begin{align}
\bm{d}_{\alpha_{k}}(\bm{\chi})=&\left[\frac{\partial{\hat{\bm{y}}'_{rad}}}{\partial{\text{Re}\{\alpha_{k}\}}},\frac{\partial{\hat{\bm{y}}'_{rad}}}{\partial{\text{Im}\{\alpha_{k}\}}}\right]^T\notag\\
=&\left[[1,j]\otimes\hat{\bm{y}}'_{rad}\otimes\bm{a}(\theta_{k})\right]^T
\end{align}
\begin{align}
\bm{d}_{\theta_{k}}(\bm{\chi})=\left[\alpha_{k}\cdot\hat{\bm{y}}'_{rad}\otimes\frac{\partial \bm{a}(\theta_{k})}{\partial\theta_{k}}\right]^{T}
\end{align}
\begin{align}
\bm{d}_{\tau_{k}}(\bm{\chi})=\left[\alpha_{k}\cdot \frac{\partial\hat{\bm{y}}'_{rad}}{\partial \tau_{k}}\otimes\bm{a}(\theta_{k})\right]^{T}
\end{align}
\begin{align}
\bm{d}_{\omega_{k}}(\bm{\chi})=\left[\alpha_{k}\cdot \frac{\partial\hat{\bm{y}}'_{rad}}{\partial \omega_{k}}\otimes\bm{a}(\theta_{k})\right]^{T}
\end{align}
where $\otimes$ denotes the Kronecker product. By further analysis, we can obtain the general radar estimation rate on the direction, range and velocity estimation of the target.

\begin{theorem}
The radar estimation rates of the direction $\theta_{k}$, range $d_{1}$ and velocity $v_{1}$ are given by
\begin{equation}\label{est-rate-theta}
R_{\theta_{k}}=\frac{1}{T_{c}}\log_{2}(1+\frac{\sigma_{\theta_{k}}^{2}}{\bm{C}(\theta_{k},\theta_{k})}),
\end{equation}

\begin{equation}\label{est-rate-tau}
R_{d_{k}}=\frac{1}{T_{c}}\log_{2}(1+\frac{4}{c_{0}^2}\cdot\frac{\sigma_{\tau_{k}}^{2}}{\bm{C}(\tau_{k},\tau_{k})}),
\end{equation}

\begin{equation}\label{est-rate-omega}
R_{v_{k}}=\frac{1}{T_{c}}\log_{2}(1+\frac{4\omega^2_{c}}{c_{0}^2}\cdot\frac{\sigma_{\omega_{k}}^{2}}{\bm{C}(\omega_{k},\omega_{k})}),
\end{equation}
respectively, where $c_{0}$ denotes the waveform propagation speed, $\omega_{c}$ denotes the carrier angular frequency,
\begin{equation}
\bm{C}(\theta_{k},\theta_{k})=\frac{\left[\frac{\partial\bm{a}^{\dag}(\theta_{k})}{\partial\theta_{k}}\Pi_{\bm{\Lambda}}^{\bot}\frac{\partial\bm{a}(\theta_{k})}{\partial\theta_{k}}\right]^{-1}}{2|\alpha_{k}|^2\bm{y}^{\dag}_{rad}\bm{\Gamma}^{-1}\bm{y}_{rad}},
\end{equation}
where
\begin{equation}
\Pi_{\bm{\Lambda}}^{\bot}=\bm{\Lambda}^{-1}-\frac{\bm{\Lambda}^{-1}\bm{a}(\theta_{k})\bm{a}^{\dag}(\theta_{k})\bm{\Lambda}^{-1}}{\bm{a}^{\dag}(\theta_{k})\bm{\Lambda}^{-1}\bm{a}(\theta_{k})}.
\end{equation}
$\bm{C}(\tau_{k},\tau_{k})$ and $\bm{C}(\omega_{k},\omega_{k})$ are given by

\begin{align}\label{reduce-crb}
CRB(\tau_{k},\omega_{k})=&
\begin{bmatrix}
\bm{C}(\tau_{k},\tau_{k})&\bm{C}(\tau_{k},\omega_{k})\\
\bm{C}(\tau_{k},\omega_{k})^{T}&\bm{C}(\omega_{k},\omega_{k})
\end{bmatrix}\notag\\
=&
\begin{bmatrix}
\bm{F}'(\tau_{k},\tau_{k})&\bm{F}'(\tau_{k},\omega_{k})\\
\bm{F}'(\tau_{k},\omega_{k})^{T}&\bm{F}'(\omega_{k},\omega_{k})
\end{bmatrix}
^{-1}
\end{align}
where $\text{CRB}_{\tau_{k},\omega_{k}}$ denotes the reduced CRB on $\tau_{k}$ and $\omega_{k}$. $\bm{F}'$ denotes the reduced FIM on $\tau_{k}$ and $\omega_{k}$ and
\begin{equation}
\bm{F}'(\tau_{k},\tau_{k})=\left[2|\alpha_{k}|^2\bm{a}^{\dag}(\theta_{k})\bm{\Lambda}^{-1}\bm{a}(\theta_{k})\right]\cdot\left(\frac{\partial\bm{y}^{\dag}_{rad}}{\partial\tau_{k}}\Pi^{\bot}_{\bm{\Gamma}}\frac{\partial\bm{y}_{rad}}{\partial\tau_{k}}\right),
\end{equation}
\begin{equation}
\bm{F}'(\tau_{k},\omega_{k})\!=\!\left[2|\alpha_{k}|^2\bm{a}^{\dag}(\theta_{k})\!\bm{\Lambda}^{-1}\!\bm{a}(\theta_{k})\right]\!\text{Re}\!\left\{\!\frac{\partial\bm{y}^{\dag}_{rad}}{\partial\tau_{k}}\Pi^{\bot}_{\Gamma}\frac{\partial\bm{y}_{rad}}{\partial\omega_{k}}\!\right\}\!,
\end{equation}
\begin{equation}
\bm{F}'(\omega_{k},\omega_{k})=\left[2|\alpha_{k}|^2\bm{a}^{\dag}(\theta_{k})\bm{\Lambda}^{-1}\bm{a}(\theta_{k})\right]\cdot\left(\!\frac{\partial\bm{y}^{\dag}_{rad}}{\partial\omega_{k}}\Pi^{\bot}_{\bm{\Gamma}}\frac{\partial\bm{y}_{rad}}{\partial\omega_{k}}\!\right),
\end{equation}
where
\begin{equation}
\Pi_{\bm{\Gamma}}^{\bot}=\bm{\Gamma}^{-1}-\frac{\bm{\Gamma}^{-1}\bm{y}_{rad}\bm{y}^{\dag}_{rad}\bm{\Gamma}^{-1}}{\bm{y}^{\dag}_{rad}\bm{\Gamma}^{-1}\bm{y}_{rad}}.
\end{equation}
\end{theorem}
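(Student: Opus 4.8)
The plan is to start from the Fisher information matrix in (\ref{FIM}) and reduce it to the three stated Cram\'er--Rao bounds by exploiting the separable (Kronecker) structure of the noiseless signal $\hat{\bm{y}}'_{rad}=\alpha_{k}\,\bm{y}'_{rad}\otimes\bm{a}(\theta_{k})$ and of the combined noise covariance $\bm{\Gamma}\otimes\bm{\Lambda}$. First I would substitute the four derivative vectors $\bm{d}_{\alpha_{k}},\bm{d}_{\theta_{k}},\bm{d}_{\tau_{k}},\bm{d}_{\omega_{k}}$ into $\bm{F}=2\,\mathrm{Re}\{\bm{d}^{*}(\bm{\Gamma}^{-1}\otimes\bm{\Lambda}^{-1})\bm{d}^{T}\}$ and apply the mixed-product identity $(\bm{A}\otimes\bm{B})(\bm{C}\otimes\bm{D})=(\bm{A}\bm{C})\otimes(\bm{B}\bm{D})$. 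Since each derivative is a single Kronecker product of a temporal vector and a spatial vector, every block of $\bm{F}$ collapses into the product of a temporal quadratic form (in $\bm{\Gamma}^{-1}$) and a spatial quadratic form (in $\bm{\Lambda}^{-1}$); for instance the $(\theta_{k},\theta_{k})$ block equals $2|\alpha_{k}|^{2}\big((\bm{y}'_{rad})^{\dag}\bm{\Gamma}^{-1}\bm{y}'_{rad}\big)\big(\tfrac{\partial\bm{a}^{\dag}}{\partial\theta_{k}}\bm{\Lambda}^{-1}\tfrac{\partial\bm{a}}{\partial\theta_{k}}\big)$, while the amplitude derivatives contribute the spatial direction $\bm{a}(\theta_{k})$ and the temporal direction $\bm{y}'_{rad}$.

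Next I would eliminate the complex amplitude $\alpha_{k}$, treated as a nuisance parameter, via the Schur-complement (block inversion) formula $\mathrm{CRB}(\bm{\eta})=[\bm{F}_{\eta\eta}-\bm{F}_{\eta\alpha}\bm{F}_{\alpha\alpha}^{-1}\bm{F}_{\alpha\eta}]^{-1}$ with $\bm{\eta}=(\theta_{k},\tau_{k},\omega_{k})$. The key structural fact is that $\theta_{k}$ couples to $\alpha_{k}$ only in the spatial domain (its derivative carries $\partial\bm{a}/\partial\theta_{k}$ against $\bm{a}$), whereas $\tau_{k},\omega_{k}$ couple to $\alpha_{k}$ only in the temporal domain (their derivatives carry $\partial\bm{y}'_{rad}/\partial\tau_{k},\,\partial\bm{y}'_{rad}/\partial\omega_{k}$ against $\bm{y}'_{rad}$). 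Carrying out the Schur complement therefore subtracts precisely the rank-one correction $\bm{\Lambda}^{-1}\bm{a}\bm{a}^{\dag}\bm{\Lambda}^{-1}/(\bm{a}^{\dag}\bm{\Lambda}^{-1}\bm{a})$ from the spatial metric of the $\theta_{k}$ entry, reproducing the oblique projector $\Pi_{\bm{\Lambda}}^{\bot}$, and subtracts $\bm{\Gamma}^{-1}\bm{y}_{rad}\bm{y}_{rad}^{\dag}\bm{\Gamma}^{-1}/(\bm{y}^{\dag}_{rad}\bm{\Gamma}^{-1}\bm{y}_{rad})$ from the temporal metric of the $(\tau_{k},\omega_{k})$ entries, reproducing $\Pi_{\bm{\Gamma}}^{\bot}$. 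I would then verify that the reduced $\theta_{k}$--$(\tau_{k},\omega_{k})$ cross terms cancel exactly: the spatial projection annihilates $\bm{a}$, against which $\tau_{k},\omega_{k}$ are aligned, so the corrected information matrix is block diagonal. Inverting the scalar $\theta_{k}$ block gives $\bm{C}(\theta_{k},\theta_{k})$, while inverting the remaining $2\times2$ block with entries $\bm{F}'(\tau_{k},\tau_{k}),\bm{F}'(\tau_{k},\omega_{k}),\bm{F}'(\omega_{k},\omega_{k})$ yields the reduced matrix $\mathrm{CRB}(\tau_{k},\omega_{k})$ in (\ref{reduce-crb}).

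Finally I would map the parameter-domain bounds to the physical quantities through the radar round-trip relations $\tau_{k}=2d_{k}/c_{0}$ and $\omega_{k}=2\omega_{c}v_{k}/c_{0}$. By the change-of-variables rule $\mathrm{CRB}(g(\vartheta))=(\partial g/\partial\vartheta)^{2}\,\mathrm{CRB}(\vartheta)$, the Jacobians $\partial\tau_{k}/\partial d_{k}=2/c_{0}$ and $\partial\omega_{k}/\partial v_{k}=2\omega_{c}/c_{0}$ inject the scalings $4/c_{0}^{2}$ and $4\omega_{c}^{2}/c_{0}^{2}$ into the ratio $\sigma^{2}_{dyn}/\mathrm{CRB}_{est}$; substituting into $R_{est}=\tfrac{1}{T_{c}}\log_{2}(1+\sigma^{2}_{dyn}/\mathrm{CRB}_{est})$ then produces (\ref{est-rate-theta})--(\ref{est-rate-omega}). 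I expect the second step to be the main obstacle: one must handle the real/imaginary parameterization of $\alpha_{k}$ so that $\bm{F}_{\alpha\alpha}$ is the correct $2\times2$ block and the leading $2\,\mathrm{Re}\{\cdot\}$ factor is tracked consistently, and one must check that the Schur complement yields the two oblique projectors \emph{exactly}, with the scalar normalizations $(\bm{y}'_{rad})^{\dag}\bm{\Gamma}^{-1}\bm{y}'_{rad}$ and $\bm{a}^{\dag}\bm{\Lambda}^{-1}\bm{a}$ emerging in the denominators; the decoupling of $\theta_{k}$ from $(\tau_{k},\omega_{k})$ must be confirmed by cancellation of the $|\alpha_{k}|^{2}$ factors rather than merely asserted.
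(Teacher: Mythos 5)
Your proposal is correct and follows essentially the same route as the paper's own proof in Appendix A: you assemble the FIM blocks from the Kronecker separability of $\hat{\bm{y}}'_{rad}=\alpha_{k}\,\bm{y}'_{rad}\otimes\bm{a}(\theta_{k})$ against $\bm{\Gamma}\otimes\bm{\Lambda}$, Schur-complement out the $2\times 2$ real/imaginary block of the nuisance amplitude $\alpha_{k}$ to produce the weighted projectors $\Pi_{\bm{\Lambda}}^{\bot}$ and $\Pi_{\bm{\Gamma}}^{\bot}$ together with the decoupling $\bm{C}(\theta_{k},\tau_{k})=\bm{C}(\theta_{k},\omega_{k})=0$, and then map $(\tau_{k},\omega_{k})$ to range and velocity through the round-trip Jacobians $2/c_{0}$ and $2\omega_{c}/c_{0}$ before inserting the bounds into the estimation-rate formula. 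Your explicit verification that the $\theta_{k}$--$(\tau_{k},\omega_{k})$ cross terms cancel exactly under the Schur complement is precisely the ``further simplification'' the paper leaves implicit, so there is no gap.
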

\begin{proof}
See Appendix A.
\end{proof}

Here, it is noted that off-diagonal entries of the $\text{CRB}_{\tau_{k},\omega_{k}}$ do not equal to zero. The reason is that during the estimation procedures, the estimation of $\tau_{k}$ and $\omega_{k}$ cannot be decoupled, i.e., the value of estimated $\tau_{k}$ affects the value of estimated $\omega_{k}$ \cite{wittman2012fisher}. Hence, the estimation rate of the range $d_{1}$ and the velocity $v_{1}$ appear in pair-wise. However, the direction estimation is independent of the range and velocity estimations, i.e., the estimated $\theta_{k}$ will not change no matter $\tau_{k}$ and $\omega_{k}$ are known or not.

Next, we consider the following special case.

\emph{Special case 1:}
\begin{enumerate}
\item The combined noise and residual interference $\hat{z}$ is complex Gaussian distributed with covariance $Cov[\hat{z}]=\left[(1-\rho^{2}_{1})\Omega_{ul}P_{ul}+K_{self}P_{dl}+\sigma^{2}_{z}\right]\bm{I}_{L\times N}$, i.e.,
    \begin{equation}
\bm{\Gamma} =
\left[
\begin{matrix}
\bar{\sigma}_{t_{0,0}}^{2} &\cdots & 0 \\
\vdots &\ddots & \vdots \\
0 &\cdots & \bar{\sigma}^{2}_{t_{0,0}} \\
\end{matrix}
\right],
\bm{\Lambda} =
\left[
\begin{matrix}
\bar{\sigma}_{s_{0,0}}^{2} &\cdots & 0 \\
\vdots &\ddots & \vdots \\
0 &\cdots & \bar{\sigma}^2_{s_{0,0}} \\
\end{matrix}
\right],
\end{equation}
$\bar{\sigma}_{t_{0,0}}^{2}\times \bar{\sigma}_{s_{0,0}}^{2} = \left[(1-\rho^{2}_{1})\Omega_{ul}P_{ul}+K_{self}P_{dl}+\sigma^{2}_{z}\right]$.
\item  The receiver antenna array is uniformly placed with half waveform length, i.e., $\bm{a}(\theta_{k})=\left[1,e^{j\pi\sin(\theta_{k})},...,e^{j(N-1)\pi\sin(\theta_{k})}\right]^{T}$
\end{enumerate}

\begin{theorem}
The estimation rate of the direction $\theta_{k}$, range $d_{1}$ and velocity $v_{1}$ under the \emph{special case 1} is given by (\ref{est-rate-theta}), (\ref{est-rate-tau}) and (\ref{est-rate-omega}), respectively, where

\begin{equation}
\bm{C}(\theta_{k},\theta_{k})=\frac{6\left[(1-\rho^2_{2})\Omega_{ul}P_{ul}+(K_{self}\!+\!K_{boun})P_{dl}\!+\!\sigma^{2}_{z}\right]}{|\alpha_{k}|^2N(N^2-1)f_{B}T_{c}P_{rad}\pi^{2}\cos^2(\theta_{k})},
\end{equation}
and $\bm{C}(\tau_{k},\tau_{k})$ and $\bm{C}(\omega_{k},\omega_{k})$ are given by (\ref{reduce-crb}), where
\begin{align}
\!\!\!\!\!\bm{F}'(\tau_{k},\tau_{k})=&\frac{2N|\alpha_{k}|^2}{(1-\rho_{2}^2)\Omega_{ul}P_{ul}+(K_{self}\!+\!K_{boun})P_{dl}+\sigma_{z}^2}\notag\\
&\times\left[\left(\frac{1}{\Delta t}\sum_{i=1}^{M}\int_{0}^{T_{c}}\left|\frac{\partial s_{i}(t)}{\partial t}\right|^2dt\right)\right.\notag\\
&\left.-\frac{1}{\xi_{r}}\left|\frac{1}{\Delta t}\sum_{i=1}^{M}\int_{0}^{T_{c}}\frac{\partial s^{\dag}_{i}(t)}{\partial t}s_{i}(t)dt\right|^2\right],
\end{align}

\begin{align}
\!\!\!\!\!\bm{F}'(\tau_{k},\omega_{k})=&\frac{2N|\alpha_{k}|^2}{(1-\rho_{2}^2)\Omega_{ul}P_{ul}+(K_{self}+K_{boun})P_{dl}+\sigma_{z}^2}\notag\\
&\times\text{Im}\left\{\frac{1}{\Delta t}\sum_{i=1}^{M}\int_{0}^{T_{c}}t\frac{\partial s^{\dag}_{i}}{\partial t}s_{i}(t)dt\right.\notag\\
&-\frac{1}{\xi_{r}}\left(\frac{1}{\Delta t}\sum_{i=1}^{M}\int_{0}^{T_{c}}\frac{\partial s^{\dag}_{i}(t)}{\partial t}s_{i}(t)dt\right)\notag\\
&\left.\times\left(\frac{1}{\Delta t}\sum_{i=1}^{M}\int_{0}^{T_{c}}t|s_{i}(t)|^2dt\right)\right\},
\end{align}

\begin{align}
\!\!\!\!\!\bm{F}'({\omega_{k},\omega_{k}})=&\frac{2N|\alpha_{k}|^2}{(1-\rho_{2}^2)\Omega_{ul}P_{ul}+(K_{self}+K_{boun})P_{dl}+\sigma_{z}^2}\notag\\
&\times\left[\frac{1}{\Delta t}\sum_{i=1}^{M}\int_{0}^{T_{c}}t^2|s_{i}(t)|^2dt\right.\notag\\
&\left.-\frac{1}{\xi_{r}}\left(\frac{1}{\Delta t}\sum_{i=1}^{M}\int_{0}^{T_{c}}t|s_{i}(t)|^2dt\right)^2\right],
\end{align}
where $\xi_{r}=f_{B}T_{c}\delta P_{rad}$.
\end{theorem}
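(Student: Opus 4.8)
The plan is to obtain Theorem~2 as a direct specialization of Theorem~1, substituting the white-noise covariances $\bm{\Gamma}=\bar{\sigma}^2_{t_{0,0}}\bm{I}_L$ and $\bm{\Lambda}=\bar{\sigma}^2_{s_{0,0}}\bm{I}_N$ together with the half-wavelength ULA steering vector, and then reducing every matrix quadratic form to an explicit scalar. Since the rate expressions (\ref{est-rate-theta})--(\ref{est-rate-omega}) are already fixed by Theorem~1, it suffices to evaluate the four quantities $\bm{C}(\theta_k,\theta_k)$, $\bm{F}'(\tau_k,\tau_k)$, $\bm{F}'(\tau_k,\omega_k)$ and $\bm{F}'(\omega_k,\omega_k)$ in closed form and insert them into the rate formulas.

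For the direction CRB, I would first note that under $\bm{\Lambda}^{-1}=\bar\sigma^{-2}_{s_{0,0}}\bm{I}_N$ the orthogonal projector collapses to $\Pi^{\bot}_{\bm\Lambda}=\bar\sigma^{-2}_{s_{0,0}}(\bm{I}_N-\bm{a}\bm{a}^\dagger/N)$, using $\bm{a}^\dagger\bm{a}=N$. Writing $\partial\bm{a}(\theta_k)/\partial\theta_k = j\pi\cos\theta_k\,\bm{D}\bm{a}(\theta_k)$ with $\bm{D}=\mathrm{diag}(0,1,\dots,N-1)$, the numerator reduces to $\pi^2\cos^2\theta_k[\bm{a}^\dagger\bm{D}^2\bm{a}-N^{-1}|\bm{a}^\dagger\bm{D}\bm{a}|^2]$. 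Evaluating the arithmetic sums $\sum_{n=0}^{N-1}n=N(N-1)/2$ and $\sum_{n=0}^{N-1}n^2=N(N-1)(2N-1)/6$ produces the compact factor $N(N^2-1)/12$. This is where the ULA geometry enters, and it is the only place the array structure is used.

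For the denominator $2|\alpha_k|^2\bm{y}^\dagger_{rad}\bm\Gamma^{-1}\bm{y}_{rad}$ and for the delay--Doppler block, the key is to convert the sampled inner products into integrals and exploit waveform orthogonality. With $\bm\Gamma^{-1}=\bar\sigma^{-2}_{t_{0,0}}\bm{I}_L$ and sampling rate $1/\Delta t=f_B$, I would approximate $\sum_{l=1}^L$ by $\tfrac{1}{\Delta t}\int_0^{T_c}$ and use $\mathrm{Cov}[\bm{s}(t)]=\tfrac{P_{rad}}{M}\bm{I}_M$ to kill the cross terms in $|\sum_i s_i|^2=\sum_i|s_i|^2$, which yields $\|\bm{y}_{rad}\|^2=f_B T_c\delta P_{rad}=\xi_r$. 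The same substitution turns $\Pi^{\bot}_{\bm\Gamma}$ into $\bar\sigma^{-2}_{t_{0,0}}(\bm{I}_L-\bm{y}_{rad}\bm{y}^\dagger_{rad}/\xi_r)$, and the prefactor $2|\alpha_k|^2\bm{a}^\dagger\bm\Lambda^{-1}\bm{a}$ becomes $2N|\alpha_k|^2/\bar\sigma^2_{s_{0,0}}$, so the product $\bar\sigma^2_{t_{0,0}}\bar\sigma^2_{s_{0,0}}$ reproduces exactly the noise-plus-interference variance appearing in the stated denominators.

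Finally, differentiating $\bm{y}_{rad}$ entrywise, $\partial/\partial\tau_k$ brings down $\partial s_i/\partial t$ while $\partial/\partial\omega_k$ brings down a factor $t$ (from $l\Delta t$); substituting these into the quadratic forms of Theorem~1 and again passing to integrals produces the three integral expressions for $\bm{F}'$. The only subtlety is the cross term: the $\tau$-derivative is weighted by $\partial s_i^\dagger/\partial t$ whereas the $\omega$-derivative carries an explicit $j$, so $\mathrm{Re}\{\partial\bm{y}^\dagger_{rad}/\partial\tau_k\,\Pi^\bot_{\bm\Gamma}\,\partial\bm{y}_{rad}/\partial\omega_k\}$ turns into an $\mathrm{Im}\{\cdot\}$, matching the stated $\bm{F}'(\tau_k,\omega_k)$. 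Inserting the three CRBs into (\ref{est-rate-theta})--(\ref{est-rate-omega}) completes the proof. I expect the main obstacle to be the careful bookkeeping of the sampled-to-continuous conversion together with the orthogonality argument that collapses the antenna sum, since this is what fixes both the $\xi_r$ normalization and the factor $N$ multiplying the temporal quadratic forms.
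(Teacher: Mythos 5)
Your proposal follows essentially the paper's own route: the paper proves Theorem~2 in a single line by ``substituting $\bm{\Lambda}$, $\bm{\Gamma}$ into the reduced CRB and extending $\hat{\bm{y}}'_{rad}$,'' which is exactly your specialization of Theorem~1, and your fleshed-out details all check out --- the projector collapse $\Pi^{\bot}_{\bm{\Lambda}}=\bar{\sigma}^{-2}_{s_{0,0}}(\bm{I}_N-\bm{a}\bm{a}^{\dag}/N)$, the ULA identity $\partial\bm{a}/\partial\theta_k=j\pi\cos\theta_k\,\bm{D}\bm{a}$ yielding the factor $N(N^2-1)/12$, the sampled-to-continuous conversion with $1/\Delta t=f_B$ plus waveform orthogonality giving $\|\bm{y}_{rad}\|^2=\xi_r$ and the prefactor $2N|\alpha_k|^2/\bar{\sigma}^2_{s_{0,0}}$, and the explicit $j$ from the Doppler derivative converting $\mathrm{Re}\{\cdot\}$ into $\mathrm{Im}\{\cdot\}$ in the cross term. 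The one bookkeeping nit is that your direction CRB inherits the duty factor $\delta$ through $\xi_r$ (denominator $f_BT_c\delta P_{rad}$) while the theorem statement writes $f_BT_cP_{rad}$; this is an inconsistency internal to the paper's stated constant, not a gap in your argument.
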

\begin{proof}
Substituting $\bm{\Lambda}$, $\bm{\Gamma}$ into (\ref{reduce-crb}) and extending $\hat{\bm{y}}'_{rad}$, and after simplification, \emph{Theorem 2} can be easily obtained.
\end{proof}

Here, we get the exact closed-form estimation rate on the direction estimation of $\theta_{k}$. However, no closed-form expressions are obtained for the time delay $\tau_{k}$ and Doppler shift $\omega_{k}$. It is noted that the estimation rate $R_{\theta_{k}}$ of the direction $\theta_{k}$ is related to the power of the received radar return signals and unrelated to the specific signal forms. However, the estimation rates $R_{\tau_{k}}$ and $R_{\omega_{k}}$ of $\tau_{k}$ and $\omega_{k}$ are determined by the specific signal forms, implying that the obtained estimation rates can serve as metric for the radar waveform design. To gain some more insights on the joint radar and communication system, we consider the specific case 2 based on the specific case 1.

\emph{Special case 2:} The radar operation adopts the code division orthogonal linear frequency modulated (LFM) signals with duty factor $\delta$ \cite{kocjancic2018,rao2013}.  The LFM radar signals are denoted as
\begin{equation}
s_{i}(t)=\sqrt{\frac{P_{rad}}{M}}\sum_{k=0}^{K-1}s'_{i}(t-kT_{R})
\end{equation}
where $T_{R}$ is the pulse repetition interval. $K$ denotes the number of the radar pulses with each CPI. $s'_{i}(t)$ denotes the radar waveform emitted by the \emph{i}-th antenna, it is given by
\begin{equation}\label{74664632}
s'_{i}(t)=\exp\left[j2\pi\frac{f_{B}}{T_{0}}(t-\frac{1}{2}T_{0})^2+j\frac{\pi}{2}b_{i}(t)\right]\cdot \emph{Rect}_{T_{0}}(t)
\end{equation}
where $b_{i}(t), i\in\{1,...,M\}$ denotes the code sequence to achieve orthogonality between different radar signal waveforms \cite{rao2013}. $\emph{Rect}_{T_{0}}(t)$ denotes the rectangular window function between $[0, T_{0}]$, i.e.,
\begin{equation}
\emph{Rect}_{T_{0}}(t)=\left\{
\begin{array}{l}
\vspace{2mm}
1,\qquad t\in[0, T_{0}]\\
0,\qquad else
\end{array}
\right.
\end{equation}

With the above radar signals, we can obtain the exact closed-form estimation rate.

\begin{theorem}
The exact closed-form estimation rates of the range $d_{1}$ and velocity $v_{1}$ under the \emph{special case 1} and \emph{special case 2} are given by (\ref{est-rate-tau}) and (\ref{est-rate-omega}), respectively, where
\begin{equation}
\text{C}(\tau_{k},\tau_{k})\!=\!\frac{\bar{\sigma}_{z}^2}{2N|\alpha_{k}|^2\delta T_{C}P_{rad}f_{B}^3}\cdot\frac{3[(K^2-1)T^2_{R}+T_{0}^2]}{\pi^2[(K^2\!-\!1)T^{2}_{R}\!-\!3T_{0}^2]},
\end{equation}
\begin{equation}
\text{C}(\omega_{k},\omega_{k})=\frac{\bar{\sigma}_{z}^2}{2N|\alpha_{k}|^2\delta T_{C}P_{rad}f_{B}}\cdot\frac{12}{(K^2-1)T^{2}_{R}-3T_{0}^2},
\end{equation}

\begin{proof}
{\color{black}Please refer to Appendix B.}
\end{proof}
\end{theorem}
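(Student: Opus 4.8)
The plan is to specialize the reduced Fisher information entries $\bm{F}'(\tau_{k},\tau_{k})$, $\bm{F}'(\tau_{k},\omega_{k})$ and $\bm{F}'(\omega_{k},\omega_{k})$ from \emph{Theorem 2} to the code-division LFM pulse train of \emph{Special case 2}, and then to invert in closed form the $2\times2$ matrix in (\ref{reduce-crb}). Writing $A=\bm{F}'(\tau_{k},\tau_{k})$, $B=\bm{F}'(\tau_{k},\omega_{k})$ and $D=\bm{F}'(\omega_{k},\omega_{k})$, the CRB block is $\frac{1}{AD-B^{2}}\bigl[\begin{smallmatrix} D & -B\\ -B & A\end{smallmatrix}\bigr]$, so that $\bm{C}(\tau_{k},\tau_{k})=D/(AD-B^{2})$ and $\bm{C}(\omega_{k},\omega_{k})=A/(AD-B^{2})$. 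The whole task therefore reduces to evaluating the five time- and frequency-moment integrals appearing in $A$, $B$, $D$ for the waveform (\ref{74664632}) and assembling them; the chain-rule factors $4/c_{0}^{2}$ and $4\omega_{c}^{2}/c_{0}^{2}$ already carried in (\ref{est-rate-tau}) and (\ref{est-rate-omega}) then convert delay and Doppler into range and velocity.

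First I would exploit the structure of the LFM waveform. Since $s_{i}'(t)=\exp[\,\cdot\,]\,\mathrm{Rect}_{T_{0}}(t)$ has unit modulus on its support, each pulse obeys $|s_{i}(t)|^{2}=P_{rad}/M$ on $[kT_{R},kT_{R}+T_{0}]$, so the amplitude moments $\sum_{i}\int t\,|s_{i}|^{2}\,dt$ and $\sum_{i}\int t^{2}\,|s_{i}|^{2}\,dt$ collapse to geometric sums over the $K$ pulse positions $\{kT_{R}\}$ plus within-pulse integrals over $[0,T_{0}]$. Summing $k=0,\dots,K-1$ with $\sum k=K(K-1)/2$ and $\sum k^{2}=K(K-1)(2K-1)/6$ is what produces the $(K^{2}-1)T_{R}^{2}$ terms, while integrating $t$ and $t^{2}$ across a single pulse produces the $T_{0}^{2}$ terms. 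For the frequency-type moments I would use $\partial_{t}s_{i}'=j\dot\phi_{i}(t)\,s_{i}'$ with instantaneous frequency $\dot\phi_{i}(t)=4\pi (f_{B}/T_{0})(t-\tfrac12 T_{0})+\tfrac{\pi}{2}\dot b_{i}(t)$; the chirp slope $4\pi f_{B}/T_{0}$ introduces the dependence on the RMS bandwidth $B_{rms}$ and, after the prefactor, the $f_{B}^{3}$ and $f_{B}$ scalings of $\bm{C}(\tau_{k},\tau_{k})$ and $\bm{C}(\omega_{k},\omega_{k})$, whereas the code term $b_{i}(t)$ only enforces orthogonality and drops out of the modulus and of the averaged moments.

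The main obstacle is the cross term. Each entry of $\bm{F}'$ in \emph{Theorem 2} already carries a centering correction (the $1/\xi_{r}$ terms, with $\xi_{r}=f_{B}T_{c}\delta P_{rad}$), which subtracts the signal centroid so that the moments behave like properly centralized second moments. After substituting the LFM integrals I expect $A$ and $D$ to be proportional to a bandwidth factor times $(K^{2}-1)T_{R}^{2}+T_{0}^{2}$, while the cross term $B$, governed by the $\mathrm{Im}\{\cdot\}$ operation, is nonzero precisely because the chirp couples delay and Doppler. The delicate point is verifying that the determinant $AD-B^{2}$ collapses to the stated denominator $(K^{2}-1)T_{R}^{2}-3T_{0}^{2}$: the $-3T_{0}^{2}$ arises from the $B^{2}$ subtraction cancelling part of the $+T_{0}^{2}$ generated in the product $AD$. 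Once $A$, $B$, $D$ are in hand this is bookkeeping, but tracking the constants through the $\mathrm{Im}\{\cdot\}$ term and the centering corrections is where sign and factor errors are most likely. Substituting the resulting $\bm{C}(\tau_{k},\tau_{k})$ and $\bm{C}(\omega_{k},\omega_{k})$ into (\ref{est-rate-tau}) and (\ref{est-rate-omega}), with $\bar{\sigma}_{z}^{2}=(1-\rho_{2}^{2})\Omega_{ul}P_{ul}+(K_{self}+K_{boun})P_{dl}+\sigma_{z}^{2}$, then yields the claimed closed forms.
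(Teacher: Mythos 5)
Your route is the same as the paper's Appendix B proof: specialize the three reduced-FIM entries of \emph{Theorem 2} to the LFM pulse train --- evaluating $\sum_i\int|\partial s_i(t)/\partial t|^2\,dt$ in the frequency domain (flat chirp spectrum over $f_B$) to get $\bm{F}'(\tau_k,\tau_k)\propto \pi^2 f_B^3/3$, the $\text{Im}\{\cdot\}$ time--frequency cross moment to get $\bm{F}'(\tau_k,\omega_k)\propto -\pi f_B^2 T_0/3$, and the centered time moments via the pulse sums $\sum_k k$ and $\sum_k k^2$ (with $\delta T_c = K T_0$) to get $\bm{F}'(\omega_k,\omega_k)\propto f_B\left[(K^2-1)T_R^2+T_0^2\right]/12$ --- and then invert the $2\times 2$ block in (\ref{reduce-crb}) exactly as you describe. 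One intermediate expectation of yours is wrong, though it is self-correcting once the integrals are actually carried out: $A=\bm{F}'(\tau_k,\tau_k)$ does \emph{not} carry the pulse-train factor $(K^2-1)T_R^2+T_0^2$; because the chirp in (\ref{74664632}) is frequency-centered (the $t-\tfrac12 T_0$ shift makes the mean instantaneous frequency zero), the $1/\xi_r$ centering term in $\bm{F}'(\tau_k,\tau_k)$ vanishes and it reduces to a pure mean-square-bandwidth term $\propto f_B^3$ with no dependence on $K$ or $T_R$ --- only $D=\bm{F}'(\omega_k,\omega_k)$ has the pulse-train factor, which your own inversion formulas in fact force, since $\bm{C}(\omega_k,\omega_k)=A/(AD-B^2)$ must have a constant numerator to match the stated closed form. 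With that correction your determinant bookkeeping is exactly the paper's: in the normalization where $AD$ carries $(K^2-1)T_R^2+T_0^2$, the cross term contributes $B^2 \propto 4T_0^2$, so $AD-B^2\propto (K^2-1)T_R^2-3T_0^2$ as you anticipated, and substituting the resulting $\bm{C}(\tau_k,\tau_k)$, $\bm{C}(\omega_k,\omega_k)$ into (\ref{est-rate-tau}) and (\ref{est-rate-omega}) with $\bar{\sigma}_z^2=(1-\rho_2^2)\Omega_{ul}P_{ul}+(K_{self}+K_{boun})P_{dl}+\sigma_z^2$ recovers the theorem.
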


\section{Numerical Results}
In the numerical simulations, for the radar operation, we focus on the line-of-sight channel. However, for the communication operation, we consider the non-line-of-sight Rayleigh fading environment. Taking the large scale fading and shadowing effects into account, we consider the following specific channel model \cite{8486331,tsai2011path},
\begin{equation}
h_{k}=\frac{10^{\beta_{k}/10}}{1+(d_{k}/d_0)^l}\cdot\bar{h}_{k}, \quad k \in\{d,u\}
\end{equation}
where $10^{\beta_{k}/10}$ characterize the shadowing effects in log-normal distribution with standard deviation $\sigma_{s}$ dB, i.e., $\beta_{w}\sim \mathcal{N}(0,\sigma_{s}^2)$ which is normal distribution in dB. $d_{k}$ denotes the distance of the corresponding link. $d_0$ denotes the reference distance. $l$ is the path loss exponent. $\bar{h}_{k}$ characterizes the small fading, which is standard unit complex distributed, i.e., $\bar{h}_{k} \sim \mathcal{CN}(0, 1)$. {\color{black}Hence, we have the variance of the estimated channel as}
\begin{align}
\Omega_{dl}=&\mathbb{E}\left\{\frac{10^{\beta_{d}/10}}{1+(d_{d}/d_0)^l}\right\},\\ \Omega_{ul}=&\mathbb{E}\left\{\frac{10^{\beta_{u}/10}}{1+(d_{u}/d_0)^l}\right\}, 
\end{align}

In the numerical simulation, the parameters are defaulted as follows (unless specified otherwise). For the large-scale fading parameters, we set $\beta_{k}=8$ dB, $d_{0}=200$ m, $l=3.8$, and $d_{d}=d_{u}=500$ m. For the channel estimation, we set $\rho_{k}=0.95$. For the system parameters, the numbers of antennas are set as $M=32$, $N=64$, the maximum transmission powers are set as $P_{rad}=20$ dBm, $P_{dl}=23$ dBm, $P_{ul}= 23$ dBm, assuming the interference can be suppressed to the noise level, i.e., $K_{self}=-23$ dB, $K_{rad}=-20$ dB, and $K_{boun}=-23$ dB, $K_{co}=-23$ dB, the variance of the noise $\sigma_{0}^2=1$, the shared bandwidth is set as $f_{B}=5$ MHz, the radar pulse relevant parameters are set as $T_{C}=50$ ms, $T_{R}=1$ ms, $\delta=0.1$, $T_{0}=100$ us referring to \cite{kumari2018ieee802,chiriyath2016inner,dogandzic2001}.

{\color{black}To investigate the feasibility of full-duplex joint radar and communication system under the impact of imperfect interference cancellation, the performance regions of the joint system in terms of DL and UL communication rate and radar estimation rate are needed. We obtain the bound of the achievable region by setting the transmission power of one of the links to maximum (unless noted otherwise, it is set to values above), and then increasing the power of the other link. It is noted that the system performance is always within the bounded region as long as either of the transmission powers is not maximum.}

\begin{figure}[ht]
  \centering
  \includegraphics[width=0.6\linewidth]{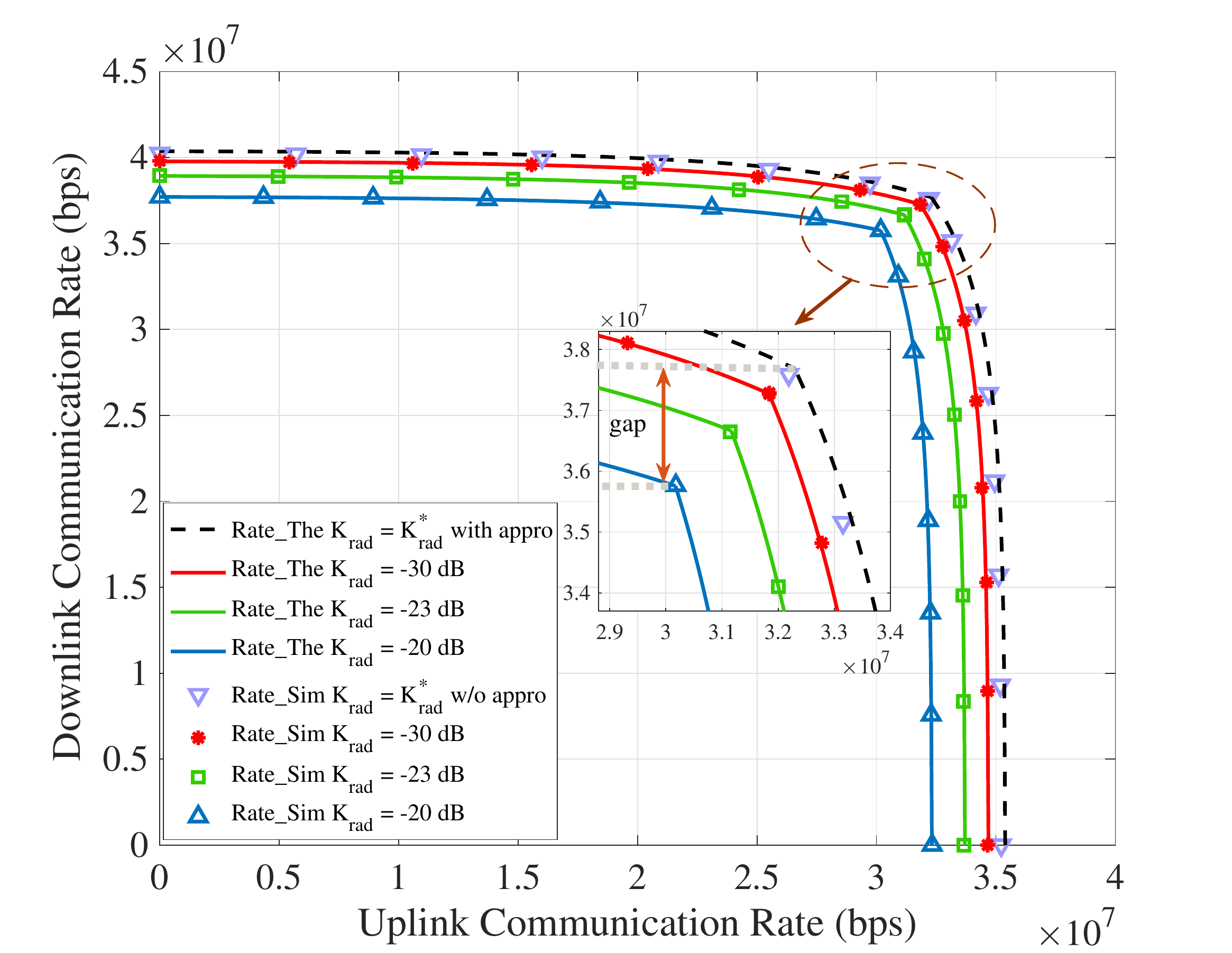}
   \caption{\color{black}Downlink and uplink rate region under the radar interference.}\label{figure1}
\end{figure}
\textcolor{black}{In Fig. \ref{figure1}, we demonstrate the downlink and uplink communication rate regions under the effects of the different radar interference cases, where the exact value of radar interference suppression level "$K_{rad}=K_{rad}^{*}$ with appro" is calculated by the closed-form theoretical limit in (\ref{radar_suppresion_level}) and "$K_{rad}=K_{rad}^{*}$ w/o appro" is simulated without approximation made in (\ref{appro_level}). By setting the transmission power of one communication link to be 20 dBm and increasing power of the other from 0 to 20 dBm, we can obtain the bounds of the communication region.}  It is found that the level of the radar interference limits the communication rate regions. However, when the radar signals can be suppressed to the noise level, e.g., $K_{rad}=-30$ dB, the communication operation almost achieves the same rate regions compared to that of achievable perfect radar cancellation reaching Cram$\acute{\textup{e}}$r-Rao bound, i.e., the case $K_{rad}=K_{rad}^{*}$. \textcolor{black}{ Note that the achievable value of radar interference suppression level $K_{rad}$ is always larger than $K_{rad}^{*}$. As can be seen, the bound of simulated suppression level without approximation, i.e. $K_{rad}=K^{*}_{rad}$ w/o appro, is always lower than the theoretical one which result from the overestimation of interference suppression capability when the approximation of residual signal is adopted. Moreover, the discrete Monte Carlo simulations match with theoretical evaluations, therefore the tightness of Jensen's inequality is demonstrated.} This result reveals that sharing the spectrum of the radar system to the communication system introduce significant communication rate regions though the radar signals act as interferences.

\begin{figure}[ht]
  \centering
  \includegraphics[width=0.6\linewidth]{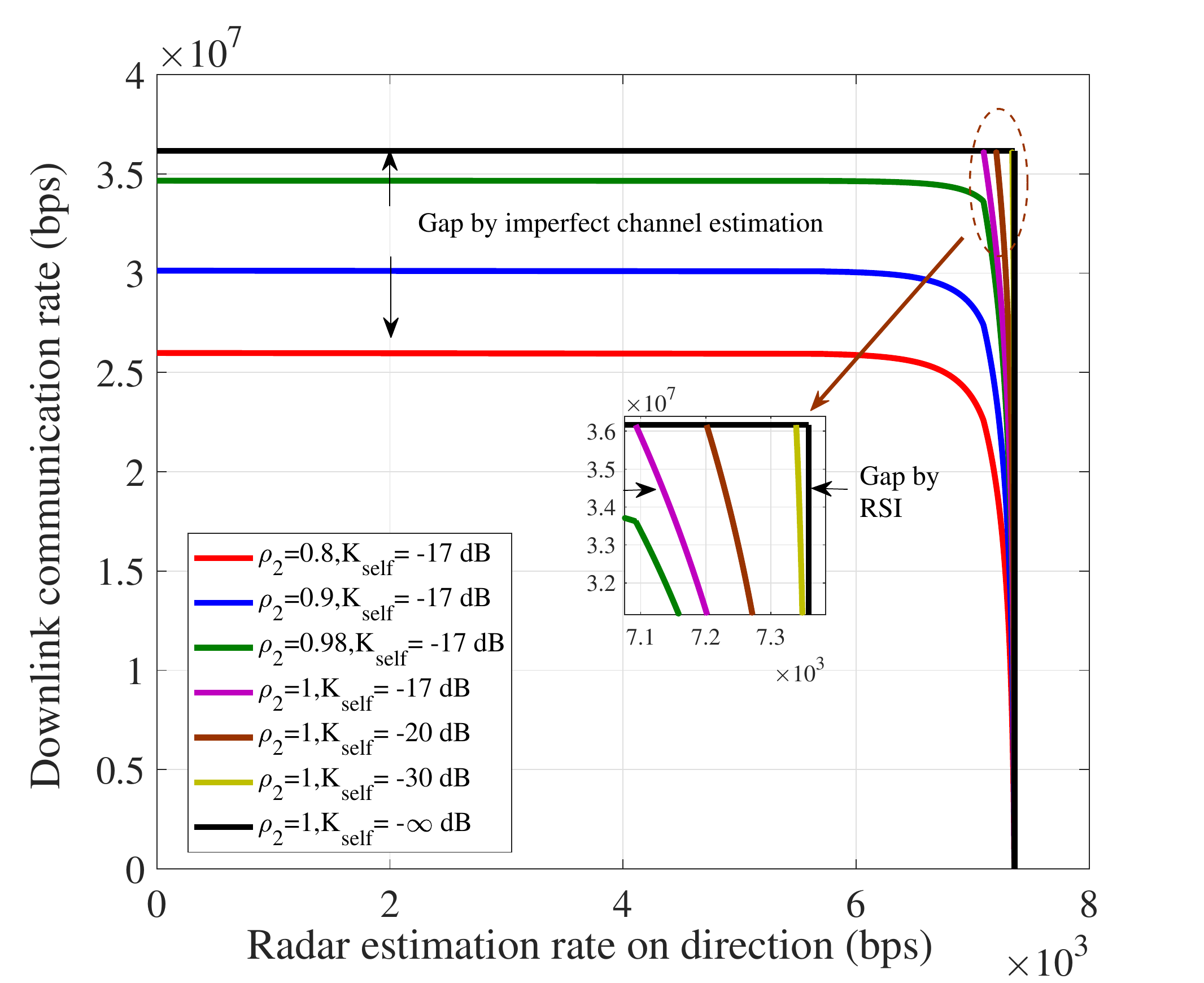}
  \caption{Downlink communication rate and direction estimation rate region.}\label{figure2}
\end{figure}

\textcolor{black}{Fig. \ref{figure2} presents the joint downlink communication and radar estimation rate regions with different channel estimation accuracies and different RSI regimes.} This figure reveals that the joint rate regions are limited by the accuracy of the channel estimation as well as the self-interference cancellation capability. However, it is observed that the RSI only has slight impacts on the joint rate regions, for example, when self-interference is cancelled by -17 dB, the rate region is almost the same compared to that with perfect self-interference cancellation.

\begin{figure}[ht]
  \centering
  \includegraphics[width=0.6\linewidth]{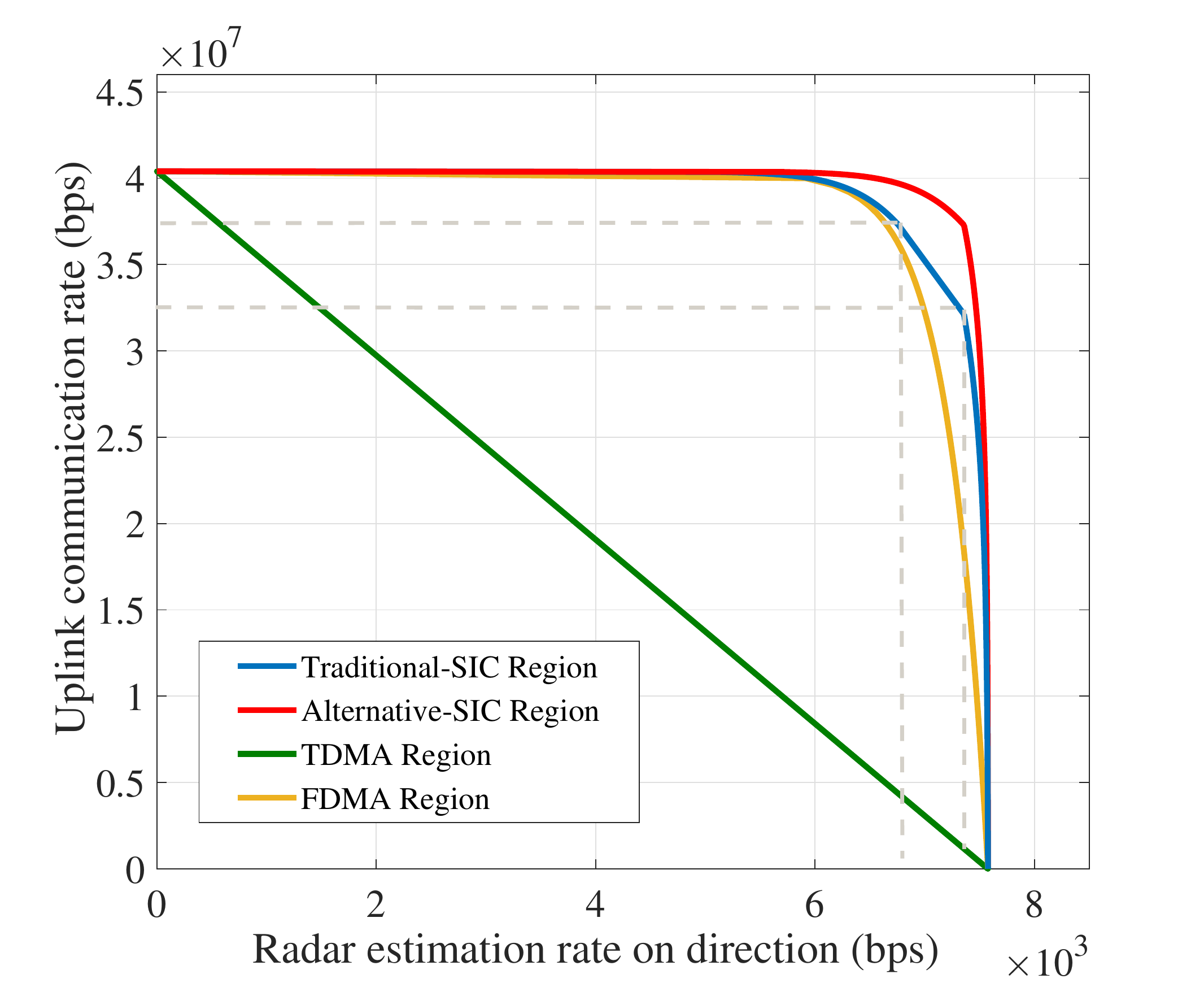}
  \caption{Uplink communication rate and direction estimation rate region.}\label{figure3}
\end{figure}

\begin{figure}[ht]
  \centering
  \includegraphics[width=0.6\linewidth]{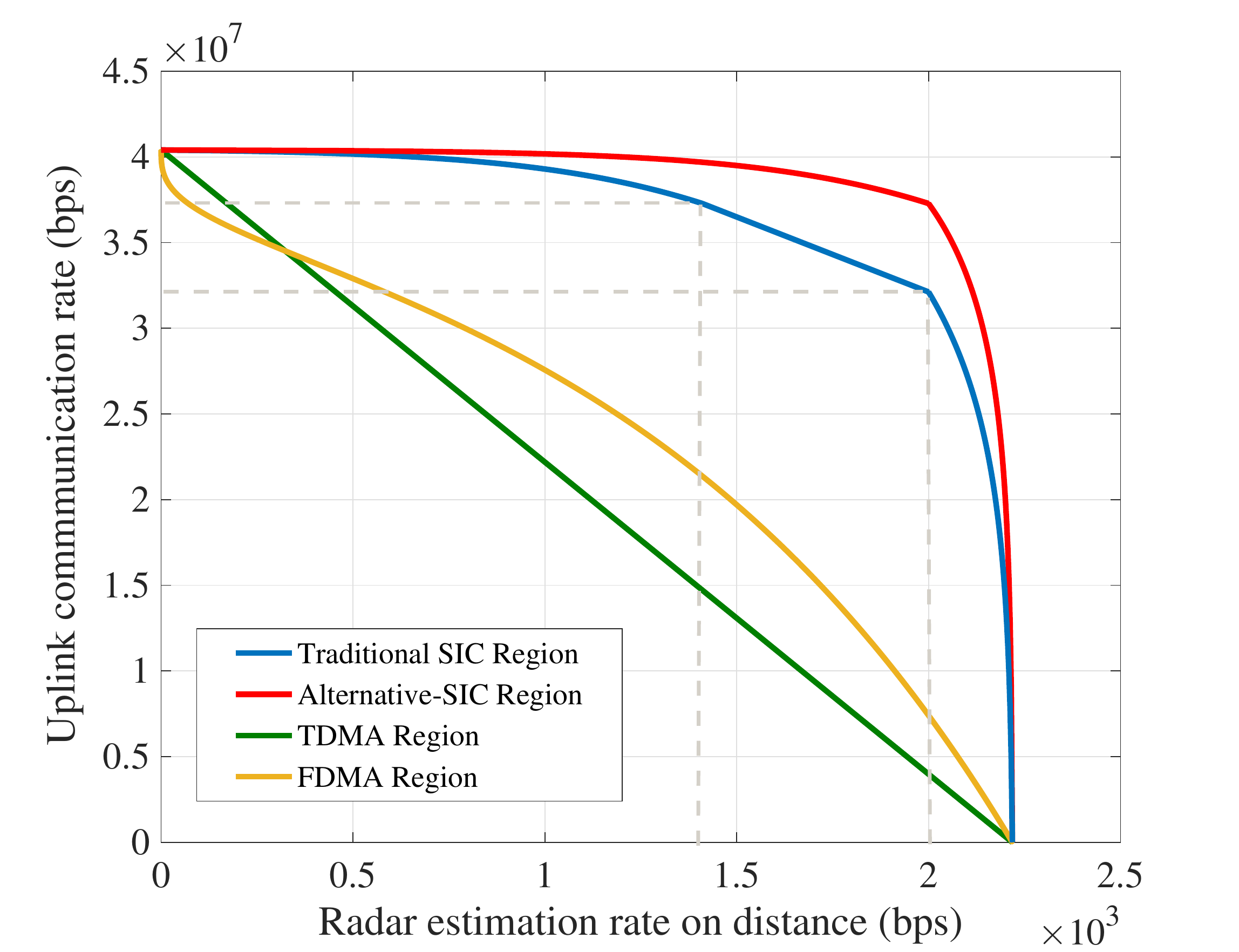}
  \caption{Uplink communication rate and distance estimation rate region.}\label{figure4}
\end{figure}

\textcolor{black}{In fig. \ref{figure3}, we compare the performance of the proposed alternative-SIC scheme in terms of joint uplink and estimation rate regions with other benchmark methods, i.e. traditional SIC scheme, time division multiple access (TDMA) and frequency division multiple access (FDMA) schemes.} With the traditional SIC scheme, the ComRad first decodes communication signals, then subtracts the communication signals and estimates the radar targets or in reverse order. With the alternative-SIC scheme, the ComRad first suppress the radar interference with known waveform information, then decode the communication signals and subtract the communication waveform, finally estimates the targets' parameters. For the TDMA scheme, the duration is divided into two parts, one part is only for communication while the other is only for the radar operation. For the FDMA scheme, the whole system bandwidth is divided into two sub-bands, one sub-band is only for communication while the other is only for radar operation. It is noted that compared to the orthogonal schemes, i.e., TDMA and FDMA, the joint schemes achieve larger rate regions, which reveals the potential on the rate regions of the spectrum sharing between the radar system and communication system. In addition, compared to the traditional SIC scheme, the proposed alternative-SIC scheme could further enlarge the rate regions with the gains depending on the interference cancellation capability.

\begin{figure}[ht]
  \centering
  \includegraphics[width=0.6\linewidth]{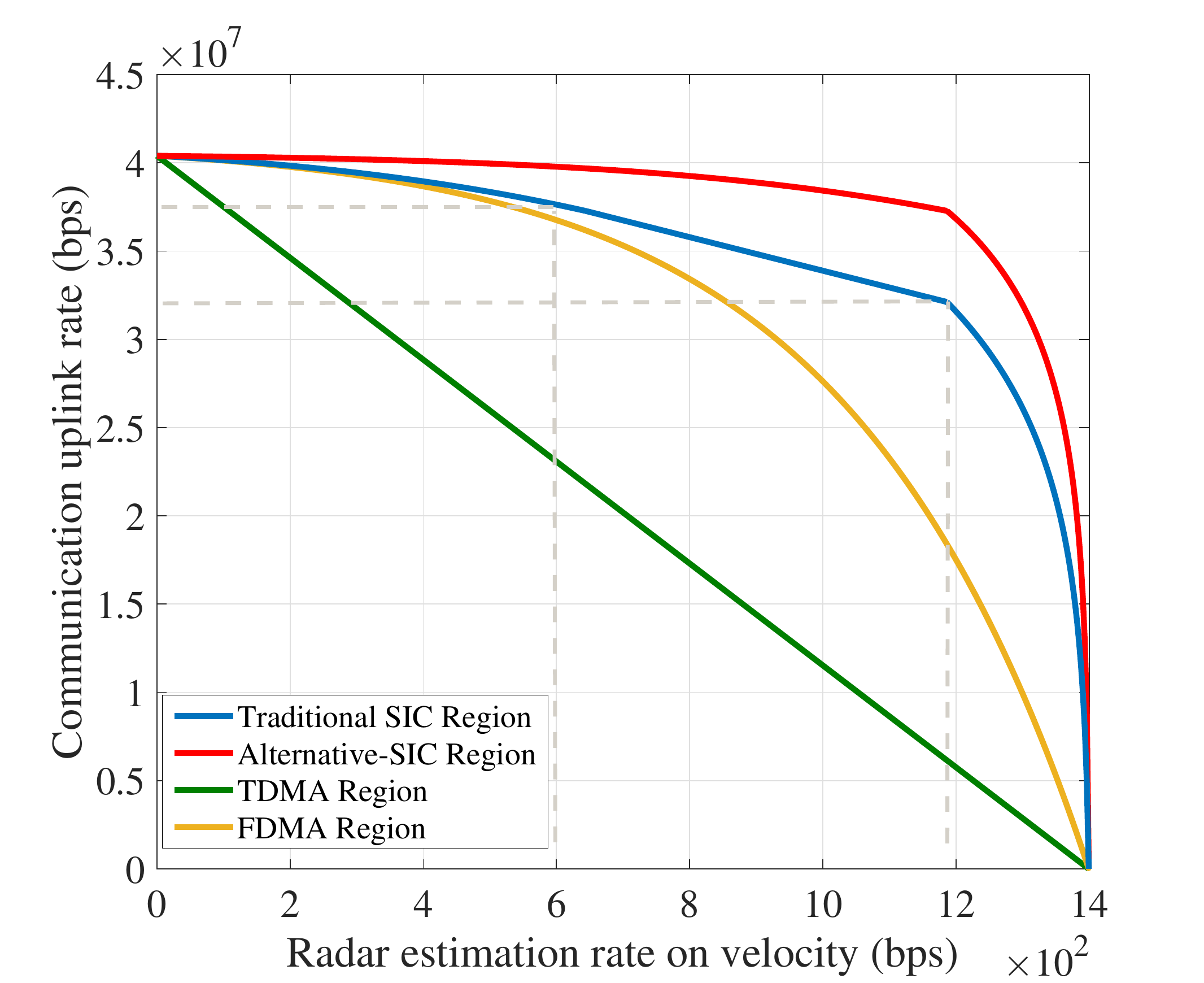}
  \caption{Uplink communication rate and velocity estimation rate region.}\label{figure5}
\end{figure}

Fig. \ref{figure4} and Fig. \ref{figure5} illustrate the joint rate regions in terms of the uplink rate and radar estimation rate on range and velocity, respectively, under the proposed alternative-SIC, traditional SIC, TDMA and FDMA schemes. It is noted that for the orthogonal working manner, the FDMA manner is not guaranteed to achieve larger rate region compared to the TDMA scheme. In addition, we note the advantages of the joint working manner compared to the isolating scheme in terms of the joint rate regions. \textcolor{black}{Both figures demonstrates the tradeoff between the uplink communication rate and radar estimation rate, i.e., the communication operation can achieve significant rate performance at the slight cost of the radar estimation performance while working jointly.} The results indicate that under certain radar estimation constraint, the spectrum sharing between radar and communication is feasible and gainful.
\begin{figure}[ht]
  \centering
  \includegraphics[width=0.6\linewidth]{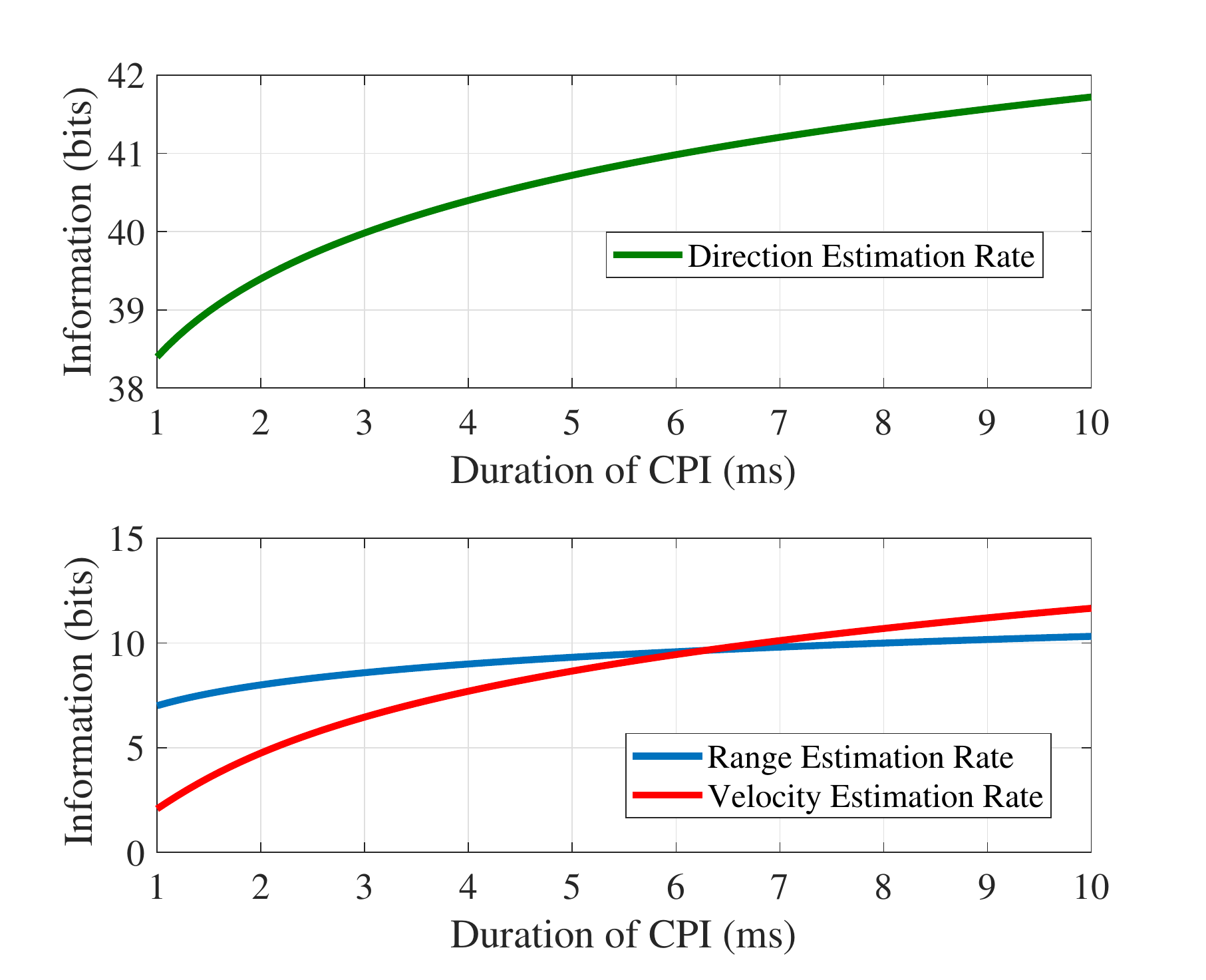}
  \caption{Estimation rates versus the duration of CPI.}\label{figure6}
\end{figure}

\begin{figure}[ht]
  \centering
  \includegraphics[width=0.6\linewidth]{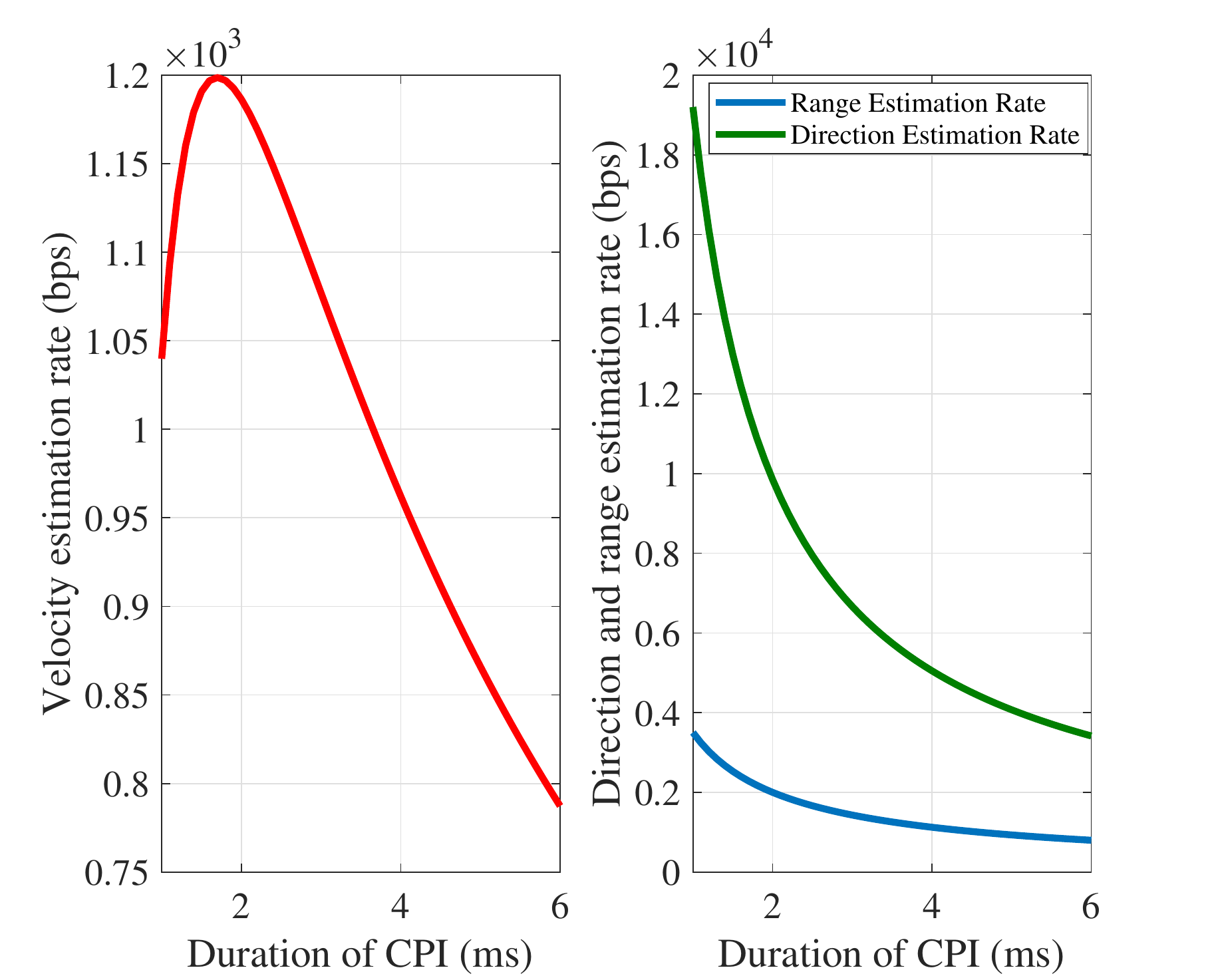}
  \caption{Estimation rates versus the duration of CPI.}\label{figure7}
\end{figure}

In Fig. \ref{figure6} and Fig. \ref{figure7}, we show the variation of the amount of the information about the target obtained within each CPI and within unit time, i.e., 1 second, with regarding to the duration of CPI, respectively. Different from the communication system where the duration of time slots do not affect the estimation rate, the estimation accuracy in radar system is related to the duration of CPI. Fig. \ref{figure6} demonstrates that the amount of information obtained about the target increases with the duration of CPI at different speed, which is consistent with our intuition that the more snapshots of the radar signals used for estimation, the higher accuracy can be achieved. However, as shown in Fig. \ref{figure7}, the increase of the duration of CPI does not guarantee the increase of the estimation rate within unit time. Interestingly, for the velocity, the estimation rate firstly increases, after reaching the maximum value and finally decreases with the duration of the CPI. For the range and direction, the estimation rates always decrease with the duration of each CPI.

We interpret this as the fundamental difference between velocity estimation and range and direction estimation. Unlike range and direction, which are estimated by instantaneous received signal, velocity is estimated by the observation of signal frequency offset in a small time interval which results from doppler effect. 
This small time interval is helpful to resolve the doppler effect which transfers more velocity information. Thus, a small rise in duration of CPI increases the velocity estimation rate, but a long CPI incurs inefficiency.

\section{Conclusion}
In this paper, the joint FD communication and radar multi-antenna system was studied. To facilitate the dual functions of the ComRad, a receiver structure based on the alternative-SIC scheme was proposed  by viewing the uplink channel and radar return channel as MAC. Based on the proposed structure, the uplink and downlink communication rate was derived under the MRT/MRC beamforming scheme considering self-interference and imperfect channel estimation. The achievable suppression level of the radar return signal as the interference to uplink communication was derived. For the radar operations, the estimation information rates in terms of the distance, direction and velocity estimations were obtained for both the general case and the special case with  uniform linear antenna array and LFM chirp radar signals. By the joint rate regions, the interaction between the radar operation and the communication operation were demonstrated. The numerical simulation results showed that sharing the radar frequency bands to the communication operation bring significant transmission capacity while the radar performance can be guaranteed even if communication signals act as interference.

\appendices
\section{Proof Of Theorem 1}
To obtain the CRB on $\theta_{k}$, $\tau_{k}$ and $\omega_{k}$, we have to obtain the FIM first, which is given by
\small{
\begin{equation}
\text{FIM}=
\begin{bmatrix}
\vspace{1mm}
  \bm{F}(\alpha_{k},\alpha_{k})&\bm{F}(\alpha_{k},\theta_{k})&\bm{F}(\alpha_{k},\tau_{k})&\bm{F}(\alpha_{k},\omega_{k}) \\
  \vspace{1mm}
  \bm{F}(\alpha_{k},\theta_{k})^{T}&\bm{F}(\theta_{k},\theta_{k}) &\bm{F}(\theta_{k},\tau_{k})&\bm{F}(\theta_{k},\omega_{k}) \\
  \vspace{1mm}
 \bm{F}(\alpha_{k},\tau_{k})^{T}&\bm{F}(\theta_{k},\tau_{k})^T &\bm{F}(\tau_{k},\tau_{k}) & \bm{F}(\tau_{k},\omega_{k})\\
  \bm{F}(\alpha_{k},\omega_{k})^T&\bm{F}(\theta_{k},\omega_{k})^T&\bm{F}(\tau_{k},\omega_{k})^T&\bm{F}(\omega_{k},\omega_{k})
\end{bmatrix}
\end{equation}}

where
\begin{equation}
\bm{F}(\alpha_{k},\alpha_{k})=2\cdot\hat{\bm{y}}_{rad}^{\dag}\bm{\Gamma}^{-1}\hat{\bm{y}}_{rad}\cdot\bm{a}(\theta_{k})^{\dag}\bm{\Lambda}^{-1}\bm{a}(\theta_{k})\cdot\bm{I}_{2}
\end{equation}
\begin{multline}
\bm{F}(\alpha_{k},\theta_{k})=2\cdot\text{Re}\left\{\left[[1,-j]\alpha_{k}\bm{y}_{rad}^{\dag}\bm{\Gamma}^{-1}\bm{y}_{rad}\right]^{T}\right.\\
\left.\times\bm{a}^{\dag}(\theta_{k})\bm{\Lambda}^{-1}\frac{\partial\bm{a}(\theta_{k})}{\partial\theta_{k}}\right\}
\end{multline}
\begin{multline}
\bm{F}(\alpha_{k},\tau_{k})=2\cdot\text{Re}\left\{\left[[1,-j]\alpha_{k}\bm{y}_{rad}^{\dag}\bm{\Gamma}^{-1}\frac{\partial\bm{y}_{rad}}{\partial\tau_{k}}\right]^{T}\right.\\
\left.\times\bm{a}^{\dag}(\theta_{k})\bm{\Lambda}^{-1}\bm{a}(\theta_{k})\right\}
\end{multline}
\begin{multline}
\bm{F}(\alpha_{k},\omega_{k})=2\cdot\text{Re}\left\{\left[ [1,-j]\alpha_{k}\bm{y}_{rad}^{\dag}\bm{\Gamma}^{-1}\frac{\partial\bm{y}_{rad}}{\partial\omega_{k}}\right]^{T}\right.\\
\left.\times\bm{a}^{\dag}(\theta_{k})\bm{\Lambda}^{-1}\bm{a}(\theta_{k})\right\}
\end{multline}
\begin{equation}
\bm{F}(\theta_{k},\theta_{k})=2\cdot|\alpha_{k}|^2\bm{y}^{\dag}_{rad}\bm{\Gamma}^{-1}\bm{y}_{rad}\frac{\partial\bm{a}^{\dag}(\theta_{k})}{\partial\theta_{k}}\bm{\Lambda}^{-1}\frac{\partial\bm{a}(\theta_{k})}{\partial\theta_{k}}
\end{equation}
\begin{equation}
\bm{F}(\theta_{k},\tau_{k})=2\cdot\text{Re}\left\{|\alpha_{k}|^2\bm{y}^{\dag}_{rad}\bm{\Gamma}^{-1}\frac{\partial\bm{y}_{rad}}{\partial\tau_{k}}\cdot\frac{\partial\bm{a}^{\dag}(\theta_{k})}{\partial\theta_{k}}\bm{\Lambda}^{-1}\bm{a}(\theta_{k})\right\}
\end{equation}
\begin{equation}
\bm{F}(\theta_{k},\omega_{k})=2\cdot\text{Re}\left\{|\alpha_{k}|^2\bm{y}^{\dag}_{rad}\bm{\Gamma}^{-1}\frac{\partial\bm{y}_{rad}}{\partial\omega_{k}}\cdot\frac{\partial\bm{a}^{\dag}(\theta_{k})}{\partial\theta_{k}}\bm{\Lambda}^{-1}\bm{a}(\theta_{k})\right\}
\end{equation}
\begin{equation}
\bm{F}(\tau_{k},\tau_{k})=2\cdot|\alpha_{k}|^2\frac{\partial\bm{y}^{\dag}_{rad}}{\partial\tau_{k}}\bm{\Gamma}^{-1}\frac{\partial\bm{y}_{rad}}{\partial\tau_{k}}\bm{a}^{\dag}(\theta_{k})\bm{\Lambda}^{-1}\bm{a}(\theta_{k})
\end{equation}
\begin{equation}
\bm{F}(\tau_{k},\omega_{k})=2\cdot\text{Re}\left\{|\alpha_{k}|^2\frac{\partial\bm{y}^{\dag}_{rad}}{\partial\tau_{k}}\bm{\Gamma}^{-1}\frac{\partial\bm{y}_{rad}}{\partial\omega_{k}}\bm{a}^{\dag}(\theta_{k})\bm{\Lambda}^{-1}\bm{a}(\theta_{k})\right\}
\end{equation}
\begin{equation}
\bm{F}(\omega_{k},\omega_{k})=2\cdot|\alpha_{k}|^2\frac{\partial\bm{y}^{\dag}_{rad}}{\partial\omega_{k}}\bm{\Gamma}^{-1}\frac{\partial\bm{y}_{rad}}{\partial\omega_{k}}\bm{a}^{\dag}(\theta_{k})\bm{\Lambda}^{-1}\bm{a}(\theta_{k})
\end{equation}

In this paper, we focus on the parameter estimation in terms of the range, direction and velocity of the target. Hence, the CRB matrix of $\theta_{k}$, $\bm{\phi}_{1}$ and $\bm{\phi}_{2}$ is given by

\begin{align}
\text{CRB}=&
\begin{bmatrix}
\vspace{2mm}
\bm{C}(\theta_{k},\theta_{k})&\bm{C}(\theta_{k},\tau_{k})&\bm{C}(\theta_{k},\omega_{k})\\
\vspace{2mm}
\bm{C}(\theta_{k},\tau_{k})^{T}&\bm{C}(\tau_{k},\tau_{k})&\bm{C}(\tau_{k},\omega_{k})\\
\bm{C}(\theta_{k},\omega_{k})^{T}&\bm{C}(\tau_{k},\omega_{k})^{T}&\bm{C}(\omega_{k},\omega_{k})
\end{bmatrix}\notag\\
=&\left\{
\begin{bmatrix}
\vspace{2mm}
  \bm{F}(\theta_{k},\theta_{k}) &\bm{F}(\theta_{k},\tau_{k})&\bm{F}(\theta_{k},\omega_{k}) \\
  \vspace{2mm}
 \bm{F}(\theta_{k},\tau_{k})^T & \bm{F}(\tau_{k},\tau_{k}) & \bm{F}(\tau_{k},\omega_{k}) \\
  \bm{F}(\theta_{k},\omega_{k})^T &\bm{F}(\tau_{k},\omega_{k}) ^T& \bm{F}(\omega_{k},\omega_{k})
\end{bmatrix}\right.\notag\\
&\left.-
\begin{bmatrix}
\vspace{2mm}
\bm{F}(\alpha_{k},\theta_{k})^{T}\\
\vspace{2mm}
\bm{F}(\alpha_{k},\tau_{k})^{T}\\
\bm{F}(\alpha_{k},\omega_{k})^{T}
\end{bmatrix}
 \bm{F}(\alpha_{k},\alpha_{k})^{-1}
\begin{bmatrix}
\vspace{2mm}
\bm{F}(\alpha_{k},\theta_{k})^{T}\\
\vspace{2mm}
\bm{F}(\alpha_{k},\tau_{k})^{T}\\
\bm{F}(\alpha_{k},\omega_{k})^{T}
\end{bmatrix}^{T}
\right\}^{-1}
\end{align}

By calculating the reduced CRB matrix, we can obtain that
\begin{align}
 &\bm{C}(\theta_{k},\tau_{k})=0,\\
 &\bm{C}(\theta_{k},\omega_{k})=0,
\end{align}

\begin{equation}
\bm{C}(\theta_{k},\theta_{k})=\left[\!\bm{F}(\theta_{k},\theta_{k})\!-\!\frac{A^{\dag}_{\alpha_{k}\theta_{k}}\cdot A_{\alpha_{k}\theta_{k}}}{2(\bm{y}^{\dag}_{rad}\bm{\Gamma}^{-1}\bm{y}_{rad})(\bm{a}^{\dag}(\theta_{k})\bm{\Lambda}^{-1}\bm{a}(\theta_{k}))}\!\right]^{-1}
\end{equation}

\begin{align}\label{12323}
\text{CRB}_{\tau_{k},\omega_{k}}&=
\begin{bmatrix}
\bm{C}(\tau_{k},\tau_{k})&\bm{C}(\tau_{k},\omega_{k})\notag\\
\bm{C}(\tau_{k},\omega_{k})^{T}&\bm{C}(\omega_{k},\omega_{k})
\end{bmatrix}\\
&=
\begin{bmatrix}
\bm{F}'(\tau_{k},\tau_{k})&\bm{F}'(\tau_{k},\omega_{k})\\
\bm{F}'(\tau_{k},\omega_{k})^{T}&\bm{F}'(\omega_{k},\omega_{k})
\end{bmatrix}
^{-1}
\end{align}

where $\bm{F}'$ denotes the reduced FIM on $\tau_{k}$ and $\omega_{k}$.
\begin{equation}
\bm{F}'(\tau_{k},\tau_{k})=\bm{F}(\tau_{k},\tau_{k})-\frac{A^{\dag}_{\alpha_{k}\tau_{k}}\cdot A_{\alpha_{k}\tau_{k}}}{2\bm{y}^{\dag}_{rad}\bm{\Gamma}^{-1}\bm{y}_{rad}\bm{a}^{\dag}(\theta_{k})\bm{\Lambda}^{-1}\bm{a}(\theta_{k})}
\end{equation}

\begin{equation}
\bm{F}'(\tau_{k},\omega_{k})=\bm{F}(\tau_{k},\omega_{k})-\frac{\text{Re}\{A^{\dag}_{\alpha_{k}\tau_{k}}\cdot A_{\alpha_{k}\omega_{k}}\}}{2\bm{y}^{\dag}_{rad}\bm{\Gamma}^{-1}\bm{y}_{rad}\bm{a}^{\dag}(\theta_{k})\bm{\Lambda}^{-1}\bm{a}(\theta_{k})}
\end{equation}

\begin{equation}
\bm{F}'(\omega_{k},\omega_{k})=\bm{F}(\omega_{k},\omega_{k})-\frac{A^{\dag}_{\alpha_{k}\omega_{k}}\cdot A_{\alpha_{k}\omega_{k}}}{2\bm{y}^{\dag}_{rad}\bm{\Gamma}^{-1}\bm{y}_{rad}\bm{a}^{\dag}(\theta_{k})\bm{\Lambda}^{-1}\bm{a}(\theta_{k})}
\end{equation}

where
\begin{equation}
A_{\alpha_{k}\theta_{k}}=2\alpha_{k}\bm{y}^{\dag}_{rad}\bm{\Gamma}^{-1}\bm{y}_{rad}\bm{a}^{\dag}(\theta_{k})\bm{\Lambda}^{-1}\frac{\partial\bm{a}(\theta_{k})}{\partial\theta_{k}}
\end{equation}

\begin{equation}
A_{\alpha_{k}\tau_{k}}=2\alpha_{k}\bm{y}^{\dag}_{rad}\bm{\Gamma}^{-1}\frac{\partial\bm{y}_{rad}}{\partial\tau_{k}}\bm{a}^{\dag}(\theta_{k})\bm{\Lambda}^{-1}\bm{a}(\theta_{k})
\end{equation}

\begin{equation}
A_{\alpha_{k}\omega_{k}}=2\alpha_{k}\bm{y}^{\dag}_{rad}\bm{\Gamma}^{-1}\frac{\partial\bm{y}_{rad}}{\partial\omega_{k}}\bm{a}^{\dag}(\theta_{k})\bm{\Lambda}^{-1}\bm{a}(\theta_{k})
\end{equation}

By further simplification, we have \emph{Theorem 1}.

\section{Proof of Theorem 2}

Based on the LFM signals expressed in (\ref{74664632}), we derive the exact  $\bm{F}'(\tau_{k},\tau_{k})$, $\bm{F}'(\tau_{k},\omega_{k})$ and $\bm{F}'(\omega_{k},\omega_{k})$, based on which the CRB can be obtained,
\begin{align}
\bm{F}'(\tau_{k},\tau_{k})=&\frac{2N|\alpha_{k}|^2}{\bar{\sigma}_{z}^{2}}\cdot f_{B}\frac{P_{rad}}{M}\sum_{i=1}^{M}\int_{0}^{T_{c}}|\frac{\partial s_{i}(t)}{\partial t}|^2dt\notag\\
=&\frac{2N|\alpha_{k}|^2}{\bar{\sigma}_{z}^{2}}\cdot f_{B}\frac{P_{rad}}{M}\sum_{i=1}^{M}\int_{-\frac{f_{B}}{2}}^{\frac{f_{B}}{2}}\frac{4\pi^{2}\delta T_{c}}{T_{0}}f^2|S_{i}(f)|^2df\notag\\
=&\frac{2N|\alpha_{k}|^2}{\bar{\sigma}_{z}^{2}}\cdot\frac{P_{rad}f_{B}^{3}\pi^{2}\delta T_{c}}{3}
\end{align}

\begin{align}
\bm{F}'(\tau_{k},\omega_{k})=&\frac{2N|\alpha_{k}|^2}{\bar{\sigma}_{z}^{2}}\cdot\text{Im}\left\{f_{B}\frac{P_{rad}}{M}\cdot\right.\notag\\
&\left.\sum_{i=1}^{M}\sum_{k=0}^{K-1}\int_{kT_{R}}^{kT_{R}+T_{0}}-\frac{4\pi f_{B}j}{T_{0}}t(t-kT_{R}-\frac{1}{2}T_{0})dt\right\}\notag\\
=&\frac{2N|\alpha_{k}|^2}{\bar{\sigma}_{z}^{2}}\cdot\text{Im}\left\{-\frac{P_{rad}f^{2}_{B}\delta\pi T_{0}T_{c}j}{3}\right\}
\end{align}

\begin{align}
\bm{F}'(\omega_{k},\omega_{k})=&\frac{2N|\alpha_{k}|^2}{\bar{\sigma}_{z}^{2}}\cdot\left[f_{B}\frac{P_{rad}}{M}\sum_{i=1}^{M}\sum_{k=0}^{K-1}\int_{kT_{R}}^{kT_{R}+T_{0}}t^2dt\right.\notag\\
&\left.-\frac{1}{\xi_{r}}\left(f_{B}\frac{P_{rad}}{M}\sum_{i=1}^{M}\sum_{k=0}^{K-1}\int_{kT_{R}}^{kT_{R}+T_{0}}tdt\right)\right]\notag\\
=&\frac{2N|\alpha_{k}|^2}{\bar{\sigma}_{z}^{2}}\cdot f_{B}P_{rad}\left[\sum_{k=0}^{K-1}\frac{3kT_{R}T^2_{0}+3k^2T^2_{R}T_{0}+T_{0}^3}{3}\right.\notag\\
&\left.-(\sum_{k=0}^{K-1}kT_{R}T_{0}+\frac{1}{2}kT_{0}^2)^2\right]\notag\\
=&\frac{2N|\alpha_{k}|^2}{\bar{\sigma}_{z}^{2}}\cdot \frac{P_{rad}f_{B} T_{c}}{12}\left[(\frac{\delta^2T_{c}^{2}}{T_{0}^2}-1)T_{R}^{2}+T_{0}^2\right]
\end{align}

Substituting $\bm{F}'(\tau_{k},\tau_{k})$, $\bm{F}'(\tau_{k},\omega_{k})$ and $\bm{F}'(\omega_{k},\omega_{k})$ into (\ref{12323}), we can have that
\begin{equation}
\text{CRB}(\tau_{k},\tau_{k})=\frac{\bar{\sigma}_{z}^2}{2N|\alpha_{k}|^2P_{rad}f_{B}\delta T_{C}}\!\cdot\!\frac{3[(K^2-1)T^2_{R}+T_{0}^2]}{f_{B}^2\pi^2[(K^2-1)T^{2}_{R}-3T_{0}^2]}
\end{equation}

\begin{equation}
\text{CRB}(\omega_{k},\omega_{k})=\frac{\bar{\sigma}_{z}^2}{2N|\alpha_{k}|^2P_{rad}f_{B}\delta T_{C}}\cdot\frac{12}{(K^2-1)T^{2}_{R}-3T_{0^2}}
\end{equation}

\bibliographystyle{IEEEtran}
\bibliography{2021-JCR-final-singlecol}

\begin{thebibliography}{10}
\providecommand{\url}[1]{#1}
\csname url@samestyle\endcsname
\providecommand{\newblock}{\relax}
\providecommand{\bibinfo}[2]{#2}
\providecommand{\BIBentrySTDinterwordspacing}{\spaceskip=0pt\relax}
\providecommand{\BIBentryALTinterwordstretchfactor}{4}
\providecommand{\BIBentryALTinterwordspacing}{\spaceskip=\fontdimen2\font plus
\BIBentryALTinterwordstretchfactor\fontdimen3\font minus
  \fontdimen4\font\relax}
\providecommand{\BIBforeignlanguage}[2]{{%
\expandafter\ifx\csname l@#1\endcsname\relax
\typeout{** WARNING: IEEEtran.bst: No hyphenation pattern has been}%
\typeout{** loaded for the language `#1'. Using the pattern for}%
\typeout{** the default language instead.}%
\else
\language=\csname l@#1\endcsname
\fi
#2}}
\providecommand{\BIBdecl}{\relax}
\BIBdecl

\bibitem{ZHOU2020253}
Y.~Zhou, L.~Liu, L.~Wang, N.~Hui, X.~Cui, J.~Wu, Y.~Peng, Y.~Qi, and C.~Xing,
  ``Service-aware 6g: An intelligent and open network based on the convergence
  of communication, computing and caching,'' \emph{Digital Communications and
  Networks}, vol.~6, no.~3, pp. 253--260, 2020.

\bibitem{8663968}
Y.~Zhou, L.~Tian, L.~Liu, and Y.~Qi, ``Fog computing enabled future mobile
  communication networks: A convergence of communication and computing,''
  \emph{IEEE Communications Magazine}, vol.~57, no.~5, pp. 20--27, 2019.

\bibitem{Fan2020Survey}
F.~{Liu}, C.~{Masouros}, A.~P. {Petropulu}, H.~{Griffiths}, and L.~{Hanzo},
  ``Joint radar and communication design: Applications, state-of-the-art, and
  the road ahead,'' \emph{IEEE Transactions on Communications}, vol.~68, no.~6,
  pp. 3834--3862, 2020.

\bibitem{mahal2017spectral}
J.~A. Mahal, A.~Khawar, A.~Abdelhadi, and T.~C. Clancy, ``Spectral coexistence
  of mimo radar and mimo cellular system,'' \emph{IEEE Trans. Aerosp. Electron.
  Syst.}, vol.~53, no.~2, pp. 655--668, Apr. 2017.

\bibitem{biswas2020design}
S.~Biswas, K.~Singh, O.~Taghizadeh, and T.~Ratnarajah, ``Design and analysis of
  fd mimo cellular systems in coexistence with mimo radar,'' \emph{IEEE
  Transactions on Wireless Communications}, vol.~19, no.~7, pp. 4727--4743,
  2020.

\bibitem{8352726}
J.~{Qian}, M.~{Lops}, {Le Zheng}, X.~{Wang}, and Z.~{He}, ``Joint system design
  for coexistence of mimo radar and mimo communication,'' \emph{IEEE Trans.
  Signal Process.}, vol.~66, no.~13, pp. 3504--3519, Jul. 2018.

\bibitem{liu2017robust}
F.~Liu, C.~Masouros, A.~Li, and T.~Ratnarajah, ``Robust mimo beamforming for
  cellular and radar coexistence,'' \emph{IEEE Wireless Commun. Lett.}, vol.~6,
  no.~3, pp. 374--377, Jun. 2017.

\bibitem{8447442}
S.~{Biswas}, K.~{Singh}, O.~{Taghizadeh}, and T.~{Ratnarajah}, ``Coexistence of
  mimo radar and fd mimo cellular systems with qos considerations,'' \emph{IEEE
  Trans. Wireless Commun.}, vol.~17, no.~11, pp. 7281--7294, Nov. 2018.

\bibitem{8531782}
Q.~{He}, Z.~{Wang}, J.~{Hu}, and R.~S. {Blum}, ``Performance gains from
  cooperative mimo radar and mimo communication systems,'' \emph{IEEE Signal
  Process. Lett.}, vol.~26, no.~1, pp. 194--198, Jan. 2019.

\bibitem{Riihonen2021Opt}
S.~D. Liyanaarachchi, T.~Riihonen, C.~B. Barneto, and M.~Valkama, ``Optimized
  waveforms for 5g–6g communication with sensing: Theory, simulations and
  experiments,'' \emph{IEEE Transactions on Wireless Communications}, pp. 1--1,
  2021.

\bibitem{chiriyath2016inner}
A.~R. Chiriyath, B.~Paul, G.~M. Jacyna, and D.~W. Bliss, ``Inner bounds on
  performance of radar and communications co-existence,'' \emph{IEEE Trans.
  Sig. Process.}, vol.~64, no.~2, pp. 464--474, Jan. 2016.

\bibitem{chiriyath2017radar}
A.~R. Chiriyath, B.~Paul, and D.~W. Bliss, ``Radar-communications convergence:
  Coexistence, cooperation, and co-design,'' \emph{IEEE Trans. Cogn. Commun.
  Network.}, vol.~3, no.~1, pp. 1--12, Mar. 2017.

\bibitem{chiriyath2017simultaneous}
------, ``Simultaneous radar detection and communications performance with
  clutter mitigation,'' in \emph{Proc. 2017 IEEE Radar Conference (RadarConf)},
  May 2017, pp. 0279--0284.

\bibitem{ahmadipour2021informationtheoretic}
M.~Ahmadipour, M.~Kobayashi, M.~Wigger, and G.~Caire, ``An
  information-theoretic approach to joint sensing and communication,'' 2021.

\bibitem{rong2017multiple}
Y.~Rong, A.~R. Chiriyath, and D.~W. Bliss, ``Multiple-antenna multiple-access
  joint radar and communications systems performance bounds,'' in \emph{Proc.
  2017 51st Asilomar Conf.}, Oct. 2017, pp. 1296--1300.

\bibitem{cheng2018outer}
C.~Li, N.~Raymondi, B.~Xia, and A.~Sabharwal, ``Outer bounds for mimo
  communicating radars: Three-node uplink,'' in \emph{Proc. 2018 IEEE Asilomar
  Conf.}, Oct. 2018, pp. 934--938.

\bibitem{liu2018mumimo}
F.~Liu, C.~Masouros, A.~Li, H.~Sun, and L.~Hanzo, ``Mu-mimo communications with
  mimo radar: From co-existence to joint transmission,'' \emph{IEEE Trans.
  Wireless Commun.}, vol.~17, no.~4, pp. 2755--2770, Apr. 2018.

\bibitem{liu2020radar}
F.~Liu, W.~Yuan, C.~Masouros, and J.~Yuan, ``Radar-assisted predictive
  beamforming for vehicular links: Communication served by sensing,''
  \emph{IEEE Transactions on Wireless Communications}, vol.~19, no.~11, pp.
  7704--7719, 2020.

\bibitem{Dokhanchi2019mmWave}
S.~H. Dokhanchi, B.~S. Mysore, K.~V. Mishra, and B.~Ottersten, ``A mmwave
  automotive joint radar-communications system,'' \emph{IEEE Transactions on
  Aerospace and Electronic Systems}, vol.~55, no.~3, pp. 1241--1260, 2019.

\bibitem{wang2019power}
F.~Wang, H.~Li, and M.~A. Govoni, ``Power allocation and co-design of
  multicarrier communication and radar systems for spectral coexistence,''
  \emph{IEEE Transactions on Signal Processing}, vol.~67, no.~14, pp.
  3818--3831, 2019.

\bibitem{xu2021wideband}
Z.~Xu and A.~Petropulu, ``A wideband dual function radar communication system
  with sparse array and ofdm waveforms,'' 2021.

\bibitem{kumari2018ieee802}
P.~Kumari, J.~Choi, N.~Gonz\'{a}lez-Prelcic, and R.~W. Heath, ``Ieee
  802.11ad-based radar: An approach to joint vehicular communication-radar
  system,'' \emph{IEEE Trans. Veh. Technol.}, vol.~67, no.~4, pp. 3012--3027,
  Apr. 2018.

\bibitem{Kumari2020Adaptive}
P.~Kumari, S.~A. Vorobyov, and R.~W. Heath, ``Adaptive virtual waveform design
  for millimeter-wave joint communication–radar,'' \emph{IEEE Transactions on
  Signal Processing}, vol.~68, pp. 715--730, 2020.

\bibitem{barneto2021full}
C.~B. Barneto, S.~D. Liyanaarachchi, M.~Heino, T.~Riihonen, and M.~Valkama,
  ``Full duplex radio/radar technology: The enabler for advanced joint
  communication and sensing,'' \emph{IEEE Wireless Communications}, vol.~28,
  no.~1, pp. 82--88, 2021.

\bibitem{barneto2021FDmag}
------, ``Full duplex radio/radar technology: The enabler for advanced joint
  communication and sensing,'' \emph{IEEE Wireless Communications}, vol.~28,
  no.~1, pp. 82--88, 2021.

\bibitem{4350230}
J.~{Li} and P.~{Stoica}, ``Mimo radar with colocated antennas,'' \emph{IEEE
  Sig. Process. Mag.}, vol.~24, no.~5, pp. 106--114, Sept. 2007.

\bibitem{6571320}
J.~{Bai} and A.~{Sabharwal}, ``Distributed full-duplex via wireless
  side-channels: Bounds and protocols,'' \emph{IEEE Trans. Wireless Commun.},
  vol.~12, no.~8, pp. 4162--4173, Aug. 2013.

\bibitem{chiriyath2015joint}
A.~R. Chiriyath and D.~W. Bliss, ``Joint radar-communications performance
  bounds: Data versus estimation information rates,'' in \emph{Proc. 2015 IEEE
  Military Commun. Conf.}, Oct. 2015, pp. 1491--1496.

\bibitem{4156404}
N.~H. {Lehmann}, E.~{Fishler}, A.~M. {Haimovich}, R.~S. {Blum}, D.~{Chizhik},
  L.~J. {Cimini}, and R.~A. {Valenzuela}, ``Evaluation of transmit diversity in
  mimo-radar direction finding,'' \emph{IEEE Trans. Sig. Process.}, vol.~55,
  no.~5, pp. 2215--2225, May 2007.

\bibitem{4801449}
K.~S. {Ahn} and R.~W. {Heath}, ``Performance analysis of maximum ratio
  combining with imperfect channel estimation in the presence of cochannel
  interferences,'' \emph{IEEE Trans. Wireless Commun.}, vol.~8, no.~3, pp.
  1080--1085, Mar. 2009.

\bibitem{cover2012elements}
T.~M. Cover and J.~A. Thomas, \emph{Elements of information theory}.\hskip 1em
  plus 0.5em minus 0.4em\relax John Wiley \& Sons, 2012.

\bibitem{duarte2012experiment}
M.~Duarte, C.~Dick, and A.~Sabharwal, ``Experiment-driven characterization of
  full-duplex wireless systems,'' \emph{IEEE Trans. Wireless Commun.}, vol.~11,
  no.~12, pp. 4296--4307, Dec. 2012.

\bibitem{wang2014radar}
W.-Q. Wang and H.~Shao, ``Radar-to-radar interference suppression for
  distributed radar sensor networks,'' \emph{Remote Sensing}, vol.~6, no.~1,
  pp. 740--755, 2014.

\bibitem{su2018time}
J.~Su, H.~Tao, M.~Tao, J.~Xie, Y.~Wang, and L.~Wang, ``Time-varying sar
  interference suppression based on delay-doppler iterative decomposition
  algorithm,'' \emph{Remote Sensing}, vol.~10, no.~9, p. 1491, 2018.

\bibitem{7801128}
B.~Xia, C.~Li, and Q.~Jiang, ``Outage performance analysis of multi-user
  selection for two-way full-duplex relay systems,'' \emph{IEEE Commun. Lett.},
  vol.~21, no.~4, pp. 933--936, Apr. 2017.

\bibitem{bliss2013adaptive}
D.~W. Bliss and S.~Govindasamy, \emph{Adaptive wireless communications: MIMO
  channels and networks}.\hskip 1em plus 0.5em minus 0.4em\relax Cambridge
  University Press, 2013.

\bibitem{kay1993fundamental}
S.~K. Kay, \emph{Fundamentals of Statistical Signal Processing: Estimation
  Theory}, 5th~ed.\hskip 1em plus 0.5em minus 0.4em\relax Prentice Hall, 1993.

\bibitem{wittman2012fisher}
\BIBentryALTinterwordspacing
D.~Wittman, ``Fisher matrix for beginners.'' [Online]. Available:
  \url{http://wittman.physics.ucdavis.edu/Fisher-matrix-guide.pdf}
\BIBentrySTDinterwordspacing

\bibitem{kocjancic2018}
L.~Kocjancic, A.~Balleri, and T.~Merlet, ``Multibeam radar based on linear
  frequency modulated waveform diversity,'' \emph{IET Radar, Sonar \& Navig.},
  vol.~12, no.~11, pp. 1320--1329, 2018.

\bibitem{rao2013}
M.~H. Rao, G.~Sharma, and K.~R. Rajeswari, ``Orthogonal phase coded waveforms
  for mimo radars,'' \emph{Int. J. Comp. Appl.}, vol.~63, no.~6, 2013.

\bibitem{8486331}
Z.~{Zhang}, Y.~{Sun}, A.~{Sabharwal}, and Z.~{Chen}, ``Impact of channel state
  misreporting on multi-user massive mimo scheduling performance,'' in
  \emph{Proc. IEEE INFOCOM}, Apr. 2018, pp. 917--925.

\bibitem{tsai2011path}
M.~Tsai, ``Path-loss and shadowing (large-scale fading),'' Nat. Taiwan Univ.,
  Tech. Rep., 2011.

\bibitem{dogandzic2001}
A.~Dogandzic and A.~Nehorai, ``Cramer-rao bounds for estimating range,
  velocity, and direction with an active array,'' \emph{IEEE Trans. Signal
  Process.}, vol.~49, no.~6, pp. 1122--1137, Jun. 2001.

\end{thebibliography}
\end{document}